\documentclass{article}

\usepackage[utf8]{inputenc}
\usepackage[english]{babel}
\usepackage{todonotes}

\advance\textheight by 2cm
\advance\topmargin -1cm

\advance\textwidth by 2cm
\advance\oddsidemargin by -1cm
\advance\evensidemargin by -1cm

\usepackage{graphicx}
\usepackage{centernot}
\usepackage{amssymb}
\usepackage{bm}
\usepackage{mathtools}
\usepackage{multirow}
\usepackage{tikz}
\usetikzlibrary{calc,positioning}

\newtheorem{theorem}{Theorem}[section]
\newtheorem{corollary}{Corollary}[section]
\newtheorem{lemma}{Lemma}[section]
\newtheorem{definition}{Definition}[section]

\newcounter{noqed}
\newcommand{\qed}{ \ifmmode\mbox{ }\fi\rule[-.05em]{.3em}{.7em}\setcounter{noqed}{0}}
\newenvironment{proof}[1][{}]{\noindent{\bf Proof#1. }\setcounter{noqed}{1}}{\ifnum\value{noqed}=1\qed\fi\par\medskip}
\newenvironment{renumerate}{\begin{enumerate}}{\end{enumerate}}

\renewcommand{\epsilon}{\varepsilon}

\def\adj{\mathrel-\joinrel\joinrel\mathrel-}   %long minus (Knuth)
\def\noadj{\hskip3pt\not\hskip-3pt\adj}  %not long minus (Knuth)
\def\..{\,\mathpunct{\ldotp\ldotp}} % intervals
\newcommand{\nott}[1]{\bar{#1}}

\title{Semi-Monotonicity for Spectral Centrality Measures}
\author{\textbf{Paolo Boldi \;\; Davide D'Ascenzo \;\; Flavio Furia \;\; Sebastiano Vigna}}
\date{%
    Dipartimento di Informatica, Università degli Studi di Milano, Italy\\[2ex]%
    \today
}

\begin{document}
\maketitle

\begin{abstract}
	\emph{Score monotonicity} and \emph{rank monotonicity} are properties describing the behavior of
	a centrality measure when an arc is added to a network: the former requires that the score of
	the target of the arc should increase, the latter that its importance with respect to the
	remaining nodes should not deteriorate. While in directed networks almost all classical
	centrality measures satisfy both properties, in undirected networks they fail for most measures:
	adding an edge can \emph{reduce} the score or the rank of one of its endpoints.
	\emph{Semi-monotonicity} is a recently introduced weaker property for undirected networks,
	requiring that \emph{at least one} of the two endpoints of the new edge enjoys monotonicity, and
	it is known to hold for closeness, harmonic centrality, distance-decay centralities and
	betweenness. In this paper we study semi-monotonicity for three classical \emph{spectral}
	centrality measures: eigenvector centrality, Katz's index, and PageRank. We
	prove that all of them are \emph{strictly} rank semi-monotone on connected undirected graphs,
	and that all of them are also score semi-monotone, (the only exception being eigenvector
	centrality when scores are normalized by projecting the constant vector on the dominant
	eigenspace, for which we provide a counterexample). In particular, adding an edge can never
	decrease the PageRank of both its endpoints, which answers a question left open in previous
	work.
\end{abstract}

\section{Introduction}
\label{sec:intro}

Centrality measures are a fundamental tool for the analysis of complex networks: they are used to
identify the most important nodes of a graph, and they have been applied in a wide range of
contexts, from social network analysis to the study of the World Wide Web. To understand to which
extent a centrality measure captures the intuitive notion of importance, it is useful to state
formally properties (a.k.a.~axioms) that a centrality should (or should not) satisfy and to check
which measures satisfy them~\cite{axioms}.

Two such properties describe the behavior of a centrality when the network is modified by the
addition of an arc~\cite{axioms,BLVRMCM,BoVRMCMC}: \emph{score monotonicity} requires that the score
of the target of the new arc increases; \emph{rank monotonicity} requires that the target of the arc
does not lose ground with respect to the other nodes: every node that used to be less important than
the target is still less important after the addition. In directed networks almost all classical
centrality measures satisfy both properties~\cite{BLVRMCM,BoVRMCMC}. In undirected networks, if we
insist that both endpoints of the new edge should enjoy the increase in score and rank, the
situation is strikingly different~\cite{boldi_furia_vigna_2023}: closeness, harmonic centrality,
betweenness, eigenvector centrality, Katz's index and PageRank are not rank monotone anymore;
betweenness and PageRank are \emph{not even score monotone}. In terms of social networks: getting a
new follower is always a good thing, but getting a new friend might be detrimental to one's
importance.

These negative results motivated the introduction of \emph{semi-monotonicity}~\cite{BDFSRSM}: a
centrality is score (rank) semi-monotone if, when an edge is added to an undirected network,
\emph{at least one} of its two endpoints enjoys an increase in score (rank). It was proved
in~\cite{BDFSRSM} that closeness, harmonic centrality, a large class of distance-decay centralities
and betweenness are rank semi-monotone; the proofs are based on a stronger, additive property of
shortest-path distances called \emph{basin dominance}, which implies
\emph{$\delta$-semi-monotonicity}: the score increase at one endpoint of the new edge is at least as
large as the score increase at every other vertex. The question of whether spectral centralities are
semi-monotone was left open in~\cite{BDFSRSM}.

In this paper we study semi-monotonicity for the classical \emph{spectral} centrality
measures~\cite{vigna}: eigenvector centrality~\cite{landau,berman}, Katz's index~\cite{katz}, and
PageRank~\cite{pagerank}. We do not discuss the fourth classical spectral measure, Seeley's
index~\cite{seeley}: on connected undirected graphs it is just $\ell_1$-normalized degree, which is
score monotone and strictly rank monotone~\cite{boldi_furia_vigna_2023}, and thus, a fortiori, score
semi-monotone and strictly rank semi-monotone. Spectral measures are not defined in terms
of shortest paths, so basin dominance is of no use, and we will see that even
$\delta$-semi-monotonicity fails for them. Nonetheless, we prove that \emph{all three measures are
strictly rank semi-monotone} on connected undirected graphs, and that they are also score
semi-monotone, with a single exception: eigenvector centrality is score semi-monotone when scores
are normalized in $\ell_1$ or $\ell_2$ norm (indeed, for every monotone normalization), but not when
they are normalized by projecting the constant vector on the dominant eigenspace, a normalization
that was also considered in~\cite{boldi_furia_vigna_2023}.

As a consequence, adding an edge can never decrease the PageRank of \emph{both} its endpoints, which
answers a question left open in~\cite{boldi_furia_vigna_2023}. Connectivity is essential for
eigenvector centrality, but it is not needed for Katz's index, and for PageRank it can be replaced
by the weaker requirement that all
scores be nonzero, which is the hypothesis of the rank-monotonicity results
of~\cite{boldi_furia_vigna_2023}.

All positive results are obtained through a new property, \emph{ratio semi-monotonicity}, which is
the multiplicative analogue of $\delta$-semi-monotonicity: the \emph{relative} score increase
$c'(z)/c(z)$ caused by the new edge is maximized at one of its endpoints. Ratio semi-monotonicity
implies rank semi-monotonicity by a one-line argument, and it turns out to be the natural property
for spectral measures: for each of them, the score of a vertex $z$ different from the endpoints $x$
and $y$ of the new edge can be written as a positive combination of the scores of $x$ and $y$, plus
a positive term accounting for the walks from $z$ that never reach $x$ or $y$. This decomposition,
obtained by splitting walks at their first passage through the edge $\{x,y\}$, combined with the
elementary inequality about mediants $(p_1+p_2)/(q_1+q_2)\leq\max(p_1/q_1,p_2/q_2)$, yields all our
results with short and elementary proofs. For PageRank, we also obtain a precise description of
\emph{which} endpoint gains score, in terms of the PageRank flow along the new edge.
Following~\cite{BLVRMCM}, where strict rank monotonicity in the directed case is proved for
arbitrary damped spectral centralities~\cite{vigna} by considering updates of a single row of the
underlying matrix, we obtain all positive results as instances of a comparison theorem for two
damped spectral centralities with the same matrix: for Katz's index and PageRank the theorem is applied
to the increment caused by the new edge, which is itself a damped spectral centrality, whereas for
eigenvector centrality it is applied to the new and the old eigenvector, both of which are damped
spectral centralities of the old graph.

The paper is organized as follows. In Section~\ref{sec:defs} we recall the definitions of score and
rank (semi-)monotonicity, and introduce ratio semi-monotonicity, proving that it implies rank
semi-monotonicity. In Section~\ref{sec:walks} we describe the first-passage and last-passage
decompositions of walks used in all proofs. Section~\ref{sec:damped} contains the comparison theorem for damped spectral centralities and its
consequences for updates; Sections~\ref{sec:katz}, \ref{sec:eigenvector} and~\ref{sec:pagerank}
contain the results about Katz's index, eigenvector centrality and PageRank; along the way we show
that Katz's index is not $\delta$-semi-monotone, and that eigenvector centrality is not score
semi-monotone under the projection normalization. Section~\ref{sec:loops} discusses the impact
of loops on our results, and Section~\ref{sec:conclusions} draws some conclusions.

\section{Score, Rank and Ratio Semi-Monotonicity}
\label{sec:defs}

An \emph{undirected graph} $G=(V_G,E_G)$ is given by a set of vertices $V_G$ and a set $E_G$ of
unordered pairs of distinct vertices, called \emph{edges}; we write $x\adj y$ if there is an edge
between $x$ and $y$, and say that $x$ and $y$ are \emph{adjacent}; $x\noadj y$ is the negation of
$x\adj y$. We denote with $G + x\adj y$ the graph obtained by adding the edge $x\adj y$ to $G$. The
\emph{degree} $d(x)$ of $x$ is the number of vertices adjacent to $x$. A \emph{walk} of length $k$
from $x$ to $y$ is a sequence of vertices $x=u_0\adj u_1\adj\cdots\adj u_k=y$; the empty walk from
$x$ to $x$ has length zero. A \emph{centrality} is a function $c$ assigning a real score $c(x)$ to
every vertex $x$ of every graph in a given class of graphs; in this paper we consider connected
undirected graphs.

Throughout the paper, we will be in a situation where $x$ and $y$ are distinct, non-adjacent
vertices of an undirected graph $G$: we will then uniformly denote with $c$ the value of a
centrality on $G$, with $c'$ its value on $G'=G+x\adj y$, and use the same convention for all other
quantities (e.g., $d'(z)$ is the degree of $z$ in $G'$).

We start by recalling the definitions of score and rank monotonicity on undirected
graphs~\cite{boldi_furia_vigna_2023}, which extend the corresponding notions for directed
graphs~\cite{axioms,BLVRMCM,BoVRMCMC}:

\begin{definition}[Score monotonicity]
	\label{def:score-mon}
	Given an undirected graph $G$, a centrality $c$ is said to be \emph{score monotone on $G$} iff
	for every pair of distinct, non-adjacent vertices $x$ and $y$ we have that
	\[
		c(x) < c'(x)\quad\text{and}\quad c(y) < c'(y),
	\]
	where $c'$ is the value of the centrality on the graph $G+x\adj y$.
\end{definition}

\begin{definition}[Rank monotonicity]
	\label{def:rank-mon}
	Given an undirected graph $G$, a centrality $c$ is said to be \emph{rank monotone on $G$} iff
	for every pair of distinct, non-adjacent vertices $x$ and $y$ and for all vertices $z\neq x,y$:
	\begin{align*}
		 & c(z) < c( x) \Rightarrow c'(z) < c'(x) \text{ and } c(z) = c(x) \Rightarrow c'(z) \leq c'(x); \\
		 & c(z) < c(y) \Rightarrow c'(z) < c'(y) \text{ and } c(z) = c(y) \Rightarrow c'(z) \leq c'(y).
	\end{align*}
	It is \emph{strictly rank monotone on $G$} iff for all $z\neq x,y$:
	\[
		c(z) \leq c(x) \Rightarrow c'(z) < c'(x)\qquad\text{and}\qquad c(z) \leq c(y) \Rightarrow c'(z) < c'(y).
	\]
	In both cases, $c'$ is the value of the centrality on the graph $G+x\adj y$.
\end{definition}

In general, we will say that a property holds \emph{on a set of graphs} if it holds on all the
graphs from the set. Semi-monotonicity~\cite{BDFSRSM} requires that just one of the two endpoints of
the new edge enjoys the corresponding monotonicity property:

\begin{definition}[Score semi-monotonicity]
	\label{def:score-semi}
	Given an undirected graph $G$, a centrality $c$ is said to be \emph{score semi-monotone on $G$}
	iff for every pair of distinct, non-adjacent vertices $x$ and $y$ we have that
	\[
		c(x) < c'(x)\quad\text{or}\quad c(y) < c'(y),
	\]
	where $c'$ is the value of the centrality on the graph $G+x\adj y$.
\end{definition}

\begin{definition}[Rank semi-monotonicity]
	\label{def:rank-semi}
	Given an undirected graph $G$, a centrality $c$ is said to be \emph{rank semi-monotone on $G$}
	iff for every pair of distinct, non-adjacent vertices $x$ and $y$ at least one of the following
	two statements holds:
	\begin{itemize}
		\item for all vertices $z\neq x,y$: $c(z) < c( x) \Rightarrow c'(z) < c'(x)$ and $c(z) =
		      c(x) \Rightarrow c'(z) \leq c'(x)$;
		\item for all vertices $z\neq x,y$: $c(z) < c( y) \Rightarrow c'(z) < c'(y)$ and $c(z) =
		      c(y) \Rightarrow c'(z) \leq c'(y)$.
	\end{itemize}
	It is \emph{strictly rank semi-monotone on $G$} iff at least one of the following two statements
	holds:
	\begin{itemize}
		\item for all vertices $z\neq x,y$: $c(z) \leq c(x) \Rightarrow c'(z) < c'(x)$;
		\item for all vertices $z\neq x,y$: $c(z) \leq c(y) \Rightarrow c'(z) < c'(y)$.
	\end{itemize}
	In both cases, $c'$ is the value of the centrality on the graph $G+x\adj y$.
\end{definition}

The rank semi-monotonicity results of~\cite{BDFSRSM} are all obtained through the following
sufficient condition, which is stated in terms of score differences rather than in terms of order:

\begin{definition}[$\delta$-semi-monotonicity~\cite{BDFSRSM}]
	\label{def:delta-semi}
	Given an undirected graph $G$, a centrality $c$ is said to be \emph{$\delta$-semi-monotone on
	$G$} iff for every pair of distinct, non-adjacent vertices $x$ and $y$
	\begin{align*}
		c'(z) - c(z) \leq c'(x) - c(x) \qquad & \text{for every $z\neq x,y$} \qquad \text{or} \\
		c'(z) - c(z) \leq c'(y) - c(y) \qquad & \text{for every $z\neq x,y$},
	\end{align*}
	where $c'$ is the value of the centrality on the graph $G+x\adj y$. It is said to be
	\emph{strictly $\delta$-semi-monotone} if the inequalities are strict.
\end{definition}

It is immediate that (strict) $\delta$-semi-monotonicity implies (strict) rank
semi-monotonicity~\cite{BDFSRSM}. In this paper we introduce a multiplicative version of the same
idea. From now on, all centralities are assumed to be strictly positive on the graphs under
consideration (this is the case for all spectral centralities on connected graphs).

\begin{definition}[Ratio semi-monotonicity]
	\label{def:ratio}
	Given an undirected graph $G$, a positive centrality $c$ is said to be \emph{ratio semi-monotone
	on $G$} iff for every pair of distinct, non-adjacent vertices $x$ and $y$
	\begin{align*}
		\frac{c'(z)}{c(z)} \leq \frac{c'(x)}{c(x)} \qquad & \text{for every $z\neq x,y$} \qquad \text{or}
		\\
		\frac{c'(z)}{c(z)} \leq \frac{c'(y)}{c(y)} \qquad & \text{for every $z\neq x,y$},
	\end{align*}
	where $c'$ is the value of the centrality on the graph $G+x\adj y$. It is said to be
	\emph{strictly ratio semi-monotone} if the inequalities are strict.
\end{definition}

In other words, the \emph{relative} score increase must be maximized at one of the two endpoints of
the new edge. Note that neither $\delta$-semi-monotonicity implies ratio semi-monotonicity nor vice
versa. However, ratio semi-monotonicity is a sufficient condition for rank semi-monotonicity,
exactly like $\delta$-semi-monotonicity:

\begin{theorem}
	\label{thm:ratio-semi-mon}
	If a positive centrality measure is (strictly) ratio semi-monotone on a graph then it is
	(strictly) rank semi-monotone on the same graph.
\end{theorem}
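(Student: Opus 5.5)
We argue directly from the definitions, using positivity of $c$ and $c'$. Fix distinct, non-adjacent $x,y$. By ratio semi-monotonicity, one of the two endpoints, which by symmetry we may call $x$, satisfies $c'(z)/c(z)\le c'(x)/c(x)$ for every $z\neq x,y$; the inequality is strict in the strict case. We will show that the rank semi-monotonicity condition then holds at this same endpoint $x$. Since both the hypothesis and the conclusion are disjunctions over $\{x,y\}$, and we show that the disjunct selected by the hypothesis yields the corresponding disjunct of the conclusion, this suffices.

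Fix $z\neq x,y$ and write $r=c'(x)/c(x)>0$, so that $c'(z)\le r\,c(z)$. If $c(z)<c(x)$, multiplying by $r>0$ gives $c'(z)\le r\,c(z)<r\,c(x)=c'(x)$. If $c(z)=c(x)$, we get $c'(z)\le r\,c(z)=r\,c(x)=c'(x)$. This is exactly the first bullet of Definition~\ref{def:rank-semi}.

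In the strict case we have $c'(z)<r\,c(z)$. Then $c(z)\le c(x)$ gives $c'(z)<r\,c(z)\le r\,c(x)=c'(x)$, and this covers both the $<$ and the $=$ cases at once, giving strict rank semi-monotonicity.

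There is no real obstacle here. The only points requiring care are that positivity of $c$ is needed so that the ratios are defined and multiplying by $c(z)$ preserves the inequalities, and that $r>0$ holds because $c'$ is positive as well. It also matters that the same endpoint works for every $z$; this is guaranteed because in Definition~\ref{def:ratio} the quantifier over $z$ lies inside the disjunction.
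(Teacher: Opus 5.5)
Your proposal is correct and is essentially the paper's proof. Both fix the endpoint $e$ supplied by ratio semi-monotonicity and use the chain $c'(z)\le c(z)\,c'(e)/c(e)\le c'(e)$. In the strict case the first inequality is strict, and the second is strict when $c(z)<c(e)$; your explicit remark that this needs $c'(e)>0$ is a point the paper leaves implicit.
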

\begin{proof}
	Let $e\in\{x,y\}$ be an endpoint for which the condition of Definition~\ref{def:ratio} holds,
	and let $z\neq x,y$ be such that $c(z)\leq c(e)$. Then
	\[
		c'(z) = c(z)\,\frac{c'(z)}{c(z)} \leq c(z)\,\frac{c'(e)}{c(e)}\leq c(e)\,\frac{c'(e)}{c(e)} = c'(e),
	\]
	and the first inequality is strict in the strict case, whereas the second inequality is strict
	if $c(z)<c(e)$. Thus, $c(z)<c(e)$ implies $c'(z)<c'(e)$ and $c(z)= c(e)$ implies $c'(z)\leq
	c'(e)$; in the strict case $c(z)\leq c(e)$ implies $c'(z)<c'(e)$.
\end{proof}

Note that a strictly ratio semi-monotone centrality is score semi-monotone as soon as the sum of the
scores of all vertices (or, more generally, any monotone norm of the score vector) does not decrease
when an edge is added: if both endpoints did not gain score, all other ratios would be smaller than
one, and the sum would strictly decrease. We will use this observation several times.

All proofs of ratio semi-monotonicity in this paper will use the following elementary inequality
about \emph{mediants}:
\begin{lemma}
	\label{lemma:mediant}
	Let $p_1$, $p_2$ be real numbers and $q_1$, $q_2$ be nonnegative real numbers that are not both
	zero, and assume that $p_i=0$ whenever $q_i=0$. Then
	\[
		\frac{p_1+p_2}{q_1+q_2}\leq \max_{i\,:\,q_i>0}\frac{p_i}{q_i}.
	\]
\end{lemma}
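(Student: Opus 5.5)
The plan is a case split on how many of the $q_i$ are positive, followed by the classical mediant argument. Let $I=\{i : q_i>0\}$, which is nonempty by hypothesis, and let $M=\max_{i\in I} p_i/q_i$. For every $i\notin I$ we have $q_i=0$, so $p_i=0$ by hypothesis, and such an index contributes nothing to either the numerator or the denominator. Hence
\[
	\frac{p_1+p_2}{q_1+q_2}=\frac{\sum_{i\in I}p_i}{\sum_{i\in I}q_i},
\]
and it suffices to prove the inequality when all the $q_i$ involved are strictly positive. Note that the left-hand side is always well defined, since $q_1+q_2>0$.

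For the remaining case, the key step is the bound $p_i=q_i\,(p_i/q_i)\le q_i M$ for each $i\in I$, which uses $q_i>0$. Summing over $I$ gives
\[
	\sum_{i\in I}p_i\le M\sum_{i\in I}q_i .
\]
Dividing by the positive quantity $\sum_{i\in I}q_i$ yields the claim. When $I$ has a single element the inequality is simply an equality, which the same computation covers.

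There is no real obstacle here. The only delicate point is the role of the hypothesis ``$p_i=0$ whenever $q_i=0$''. Without it, a term with $q_i=0$ and $p_i>0$ would make the left-hand side larger than the maximum over the remaining ratios. So the one thing to check carefully is that this hypothesis is invoked exactly when dropping the indices outside $I$. No sign assumption on the $p_i$ is needed, because $M$ may be negative and multiplying by $q_i\ge 0$ preserves the inequality.
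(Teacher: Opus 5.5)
Your proposal is correct and is essentially the paper's proof: bound $p_i\leq q_i\max_{i\,:\,q_i>0}p_i/q_i$ termwise and sum. The paper handles indices with $q_i=0$ by noting the bound holds trivially there ($0\leq 0$), whereas you discard those indices first; this is only a cosmetic difference.
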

\begin{proof}
	Let $m$ be the maximum on the right-hand side. Then $p_i\leq mq_i$ for every $i$ (trivially if
	$q_i=0$), and summing we obtain $p_1+p_2\leq m(q_1+q_2)$.
\end{proof}

\section{First-Passage Decomposition of Walks}
\label{sec:walks}

All spectral centralities we consider can be expressed in terms of sums of weights of walks. Let $W$
be a nonnegative matrix indexed by the vertices of $G$. A \emph{walk of $W$} of length $k$ from $u$
to $v$ is a sequence of vertices $u=u_0,u_1,\dots,u_k=v$ such that $W_{u_{i}u_{i+1}}>0$ for $0\le
i<k$: let us define its \emph{weight} as the product $W_{u_0u_1}W_{u_1u_2}\cdots W_{u_{k-1}u_k}$
(the empty walk having weight one). In all cases except for Section~\ref{sec:damped}, $W_{uv}$ is
nonzero only if $u\adj v$, and the walks of $W$ are exactly the walks of $G$: we will then simply
speak of walks. If we assume that the spectral radius of $W$ is smaller than 1, so that
\[
	N=(1-W)^{-1}=\sum_{k\geq 0}W^k
\]
is well defined and nonnegative, then $N_{uv}$ is the sum of the weights of all walks of $W$ from
$u$ to $v$.

Now fix a set of vertices $T$. For $u\in V_G$ and $z\in T$, let $\Phi_{uz}$ be the sum of the
weights of all walks from $u$ to $z$ that touch $T$ only at their last vertex (a
\emph{first-passage} walk to $z$); in particular, if $u\in T$ then $\Phi_{uz}$ is one if $u=z$ and
zero otherwise. Symmetrically, for $z\in T$ and $v\in V_G$, let $\Psi_{zv}$ be the sum of the
weights of all walks from $z$ to $v$ that touch $T$ only at their first vertex (a
\emph{last-passage} walk from $z$); if $v\in T$ then $\Psi_{zv}$ is one if $v=z$ and zero otherwise.
Finally, let $\nott N_{uv}$ be the sum of the weights of all walks from $u$ to $v$ that do not touch
$T$ at all (so $\nott N_{uv}=0$ if $u\in T$ or $v\in T$, and $\nott N_{uu}\geq 1$ if $u\notin T$,
because of the empty walk). Splitting each walk at its first, or at its last, vertex in $T$, if any,
we obtain:
\begin{lemma}[First- and last-passage decomposition]
	\label{lemma:first-passage}
	For all vertices $u$, $v$,
	\[
		N_{uv}=\nott N_{uv}+\sum_{z\in T}\Phi_{uz}N_{zv}
		\qquad\text{and}\qquad
		N_{uv}=\nott N_{uv}+\sum_{z\in T}N_{uz}\Psi_{zv}.
	\]
\end{lemma}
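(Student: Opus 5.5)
We prove the first identity by partitioning the walks from $u$ to $v$; the second follows from the same argument read in reverse. The set of walks of $W$ from $u$ to $v$ splits into two disjoint classes: walks that never touch $T$, and walks that touch $T$ at least once. The first class contributes exactly $\nott N_{uv}$, by definition. For the second class, let $z\in T$ be the first vertex of the walk lying in $T$, and split the walk at the first occurrence of $z$. The prefix is a walk from $u$ to $z$ that touches $T$ only at its last vertex, so it is a first-passage walk. The suffix is an arbitrary walk of $W$ from $z$ to $v$.

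Conversely, take any first-passage walk from $u$ to $z\in T$ and any walk from $z$ to $v$, and concatenate them, identifying the shared vertex $z$. The result is a walk from $u$ to $v$ that touches $T$, and its first vertex in $T$ is exactly that occurrence of $z$, since the prefix avoids $T$ before its end. So splitting and concatenating are mutually inverse, and we get a bijection between
\begin{itemize}
\item walks from $u$ to $v$ that touch $T$, and
\item triples made of a vertex $z\in T$, a first-passage walk from $u$ to $z$, and a walk from $z$ to $v$.
\end{itemize}

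The weight of a walk is the product of the entries $W$ along its steps. The steps of the concatenation are exactly the steps of the prefix followed by the steps of the suffix, so the weight is multiplicative under concatenation. The empty walk has weight one, which covers the degenerate cases $u=z$ and $z=v$.

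All weights are nonnegative, so the series may be rearranged freely; convergence is guaranteed because $N$ is finite under the spectral-radius assumption. Summing the weights over the second class therefore gives $\sum_{z\in T}\Phi_{uz}N_{zv}$, which proves the first identity. The boundary convention $\Phi_{uz}=[u=z]$ for $u\in T$ agrees with this: the only first-passage walk from such a $u$ is the empty one, and correspondingly $\nott N_{uv}=0$.

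The second identity follows symmetrically by splitting each walk at its last vertex in $T$. The prefix is an arbitrary walk from $u$ to $z$, and the suffix touches $T$ only at its first vertex, so it is a last-passage walk counted by $\Psi_{zv}$.

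The only real care needed is in the bijection itself: the splitting point must be the first occurrence of a vertex of $T$ (resp. the last), not merely some occurrence, so that each walk is counted exactly once.
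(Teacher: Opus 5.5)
Your proposal is correct and follows essentially the same route as the paper's proof: split each walk touching $T$ at its first vertex in $T$, so that it corresponds bijectively to a first-passage walk followed by an arbitrary walk, with multiplicative weights, and handle the second identity by splitting at the last vertex in $T$. Your extra remarks on rearranging nonnegative series and on the boundary case $u\in T$ are accurate details that the paper leaves implicit.
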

\begin{proof}
	A walk from $u$ to $v$ either does not touch $T$, or it has a first vertex $z\in T$; in the
	latter case, it splits uniquely into a first-passage walk from $u$ to $z$ followed by a walk
	from $z$ to $v$, and the weight of the walk is the product of the weights of the two parts.
	Conversely, every such pair of walks gives a walk from $u$ to $v$ whose first vertex in $T$ is
	$z$. This proves the first identity; the second one is proved in the same way, splitting the
	walk at its last vertex in $T$ (equivalently, it is the first identity for the transpose of
	$W$).
\end{proof}
Note that $\Psi_{zv}$ is the transpose-of-$W$ analogue of $\Phi_{vz}$: if $W$ is symmetric,
$\Psi_{zv}=\Phi_{vz}$.

Note that if $W$ is row-substochastic, that is, $W=\alpha P$ for a row-stochastic matrix $P$ and
$0<\alpha<1$, all quantities above have a probabilistic meaning in terms of the Markov chain with
transition matrix $P$ \emph{killed} with probability $1-\alpha$ at each step: $N_{uv}$ is the
expected number of visits to $v$ (counting the starting point) of the killed chain started at $u$;
$\Phi_{uz}$ is the probability that the first vertex of $T$ visited by the killed chain started at
$u$ is $z$; $\Psi_{zv}$ is the expected number of visits to $v$ of the killed chain started at $z\in
T$ before it returns to $T$; and $\nott N_{uv}$ is the expected number of visits to $v$ before
entering $T$. Two consequences of Lemma~\ref{lemma:first-passage} for $T=\{u\}$ will be useful in
this case: if $f_u$ is the probability that the killed chain started at $u$ returns to $u$, and
$h_{vu}$ is the probability that the killed chain started at $v$ visits $u$, then
\begin{equation}
	\label{eq:return}
	N_{uu}=\frac1{1-f_u},\qquad N_{vu}=h_{vu}N_{uu}.
\end{equation}

We will often apply the lemma with $T=\{x,y\}$, the endpoints of the new edge, so we will deal only
with $\Phi_{ux}$, $\Phi_{uy}$, $\Psi_{xv}$ and $\Psi_{yv}$. In the following, $A$ denotes the
adjacency matrix of $G$, $A'$ that of $G'$, $\rho(\cdot)$ denotes the spectral radius, and $\mathbf
1$ the vector with all entries equal to one. We shall denote the attenuation factor of Katz's index
and the damping factor of PageRank with $\beta$ and $\alpha$, respectively.

\section{Damped Spectral Centralities}
\label{sec:damped}

Following~\cite{vigna}, a \emph{damped spectral centrality} (called a \emph{damped spectral ranking} in ~\cite{vigna}) is a row vector of the form
\[
	\bm r=\bm v\sum_{k\geq0}\beta^kM^k=\bm v(1-\beta M)^{-1},
\]
where $M$ is a nonnegative matrix indexed by the vertices of the graph, $\bm v$ is a row vector (the
\emph{preference vector}), and $0<\beta<1/\rho(M)$ is a \emph{damping factor}. In the notation of
Section~\ref{sec:walks}, setting $W=\beta M$ we have $\bm r=\bm vN$: that is, $r(z)=\sum_u v_uN_{uz}$ is a sum
over all walks of $M$ ending at $z$, each walk being weighted by $\beta^k$ times the product of the
entries of $M$ along the walk, times the preference of its starting vertex. Katz's index is the
damped spectral centrality when $M=A$ and $\bm v=\mathbf 1$; PageRank with damping factor $\alpha$ is the damped spectral centrality when $\beta=\alpha$, $M$ is the
transition matrix of the uniform random walk and $\bm v$ is a distribution; and, as we
will show (Lemma~\ref{lemma:eig-identity}), the dominant eigenvector of $G+x\adj y$ is the damped
spectral centrality of $G$ with a preference vector concentrated on $x$ and $y$. Preference vectors are
usually nonnegative, but we allow signed ones, because the \emph{increment} of a damped spectral
centrality caused by an update of $M$ turns out to be a damped spectral centrality with a signed
preference vector.

In this section we prove a \emph{comparison theorem} for two damped spectral centralities with the same
matrix, and we derive from it a bound on the relative increments caused by an update of $M$ that
modifies the rows $x$ and $y$, in the spirit of~\cite{BLVRMCM}, where strict rank monotonicity of
damped spectral centralities on directed graphs is proved for updates modifying a single row. All
positive results of the paper about Katz's index, eigenvector centrality and PageRank will be
obtained by instantiating these two results. Throughout the section, $M$ is a nonnegative matrix,
not necessarily symmetric, $0<\beta<1/\rho(M)$, $N=(1-\beta M)^{-1}$, and $x\neq y$ are two fixed
vertices; all walk-related quantities ($\nott N$, $\Phi$, $\Psi$) refer to $W=\beta M$ and
$T=\{x,y\}$, whereas walks are walks of $M$. The \emph{successors} of a vertex $u$ are the vertices $w$
such that $M_{uw}>0$.

The comparison theorem rests on a decomposition that is the whole point of the section. Consider two
damped spectral centralities $\bm q=\bm wN$ and $\bm r=\bm vN$, where $\bm w$ is nonpositive outside of
$\{x,y\}$ and $\bm v$ is nonnegative. Then, for every vertex $z\neq x,y$, $q(z)$ and $r(z)$ are the
\emph{same} nonnegative combination of $q(x)$, $q(y)$ and of $r(x)$, $r(y)$, respectively, up to a
correction accounting for the walks that never pass through $x$ or $y$; and the correction is
nonpositive for $\bm q$ and nonnegative for $\bm r$. Formally:

\begin{lemma}[Decomposition]
	\label{lemma:damped-decomp}
	Let $\bm q=\bm wN$ and $\bm r=\bm vN$, where $w_u\leq0$ for every $u\neq x,y$ and $\bm v\geq0$.
	Then, for every $z\neq x,y$,
	\[
		q(z)=\Psi_{xz}q(x)+\Psi_{yz}q(y)-\gamma_z
		\qquad\text{and}\qquad
		r(z)=\Psi_{xz}r(x)+\Psi_{yz}r(y)+\nu_z,
	\]
	where $\gamma_z=-\sum_u w_u\nott N_{uz}\geq0$ and $\nu_z=\sum_u v_u\nott N_{uz}\geq0$.
	Moreover, if there is no walk from $x$ or $y$ to $z$, then $\Psi_{xz}=\Psi_{yz}=0$, and
	$\gamma_z=0$ provided that $\bm w$ is supported on $x$, $y$ and their successors.
\end{lemma}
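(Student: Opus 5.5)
We apply the last-passage form of Lemma~\ref{lemma:first-passage}, with $W=\beta M$ and $T=\{x,y\}$, to every entry $N_{uz}$ with $z\neq x,y$:
\[
N_{uz}=\nott N_{uz}+N_{ux}\Psi_{xz}+N_{uy}\Psi_{yz}.
\]
We then multiply by $w_u$ and sum over $u$. By definition $q(x)=\sum_u w_uN_{ux}$ and $q(y)=\sum_u w_uN_{uy}$, so
\[
q(z)=\Psi_{xz}q(x)+\Psi_{yz}q(y)+\sum_u w_u\nott N_{uz}.
\]
The last sum is $-\gamma_z$. It has the right sign because $\nott N_{uz}=0$ whenever $u\in\{x,y\}$ (such walks touch $T$), so only the $u\neq x,y$ contribute, and for these $w_u\leq0$ while $\nott N_{uz}\geq0$; hence $\gamma_z\geq0$. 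The identity for $\bm r$ follows from exactly the same computation with $\bm v$ in place of $\bm w$. Here $\nu_z\geq0$ simply because $\bm v\geq0$ and $\nott N\geq0$. All sums are absolutely convergent, since $N=\sum_k\beta^kM^k$ converges for $\beta<1/\rho(M)$, so the rearrangements are legitimate.

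For the ``moreover'' part, suppose there is no walk of $M$ from $x$ or $y$ to $z$. Then $\Psi_{xz}=\Psi_{yz}=0$, since these quantities are sums of weights of (particular) walks from $x$, resp.\ $y$, to $z$, and there are none.

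For $\gamma_z$, assume further that $\bm w$ is supported on $x$, $y$ and their successors. The terms with $u\in\{x,y\}$ already vanish, so consider a term $w_u\nott N_{uz}$ with $u\neq x,y$ and $w_u\neq0$. Then $u$ is a successor of $x$ or of $y$, say $M_{xu}>0$. If $\nott N_{uz}$ were positive, there would be a walk from $u$ to $z$. Prefixing it with the step $x\to u$ would give a walk from $x$ to $z$, contradicting the hypothesis. Hence $\nott N_{uz}=0$ for every such $u$, and so $\gamma_z=0$.

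The only points that need care are the bookkeeping that $\nott N_{uz}$ vanishes for $u\in T$, which is what allows the values of $w$ at $x,y$ to have any sign, and the support argument in the last step. Neither is a real obstacle.
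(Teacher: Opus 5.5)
Your proposal is correct and follows the paper's proof essentially step for step. You use the last-passage decomposition with $T=\{x,y\}$, note that $\nott N_{uz}=0$ for $u\in\{x,y\}$ to sign $\gamma_z$, and obtain the vanishing of $\gamma_z$ by prefixing a walk from a successor $u$ of $x$ or $y$ with the step from $x$ or $y$ to $u$, exactly as the paper does.
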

\begin{proof}
	By the last-passage decomposition (Lemma~\ref{lemma:first-passage}),
	\begin{multline*}
		q(z)=\sum_u w_uN_{uz}=\sum_u w_u\nott N_{uz}+\Psi_{xz}\sum_u w_uN_{ux}+\Psi_{yz}\sum_u w_uN_{uy}=\\
		=-\gamma_z+\Psi_{xz}q(x)+\Psi_{yz}q(y),
	\end{multline*}
	and $\gamma_z\geq0$ because $\nott N_{uz}=0$ for $u\in\{x,y\}$ and $w_u\leq0$ otherwise. The
	same computation gives the identity for $\bm r$. Finally, a walk from $x$ or $y$ to $z$ has a
	vertex in $\{x,y\}$ followed by a last-passage walk, so if there are no such walks then
	$\Psi_{xz}=\Psi_{yz}=0$; moreover, in that case $\nott N_{uz}=0$ for every successor $u$ of $x$
	or $y$, as a walk from $u$ to $z$ would extend to a walk from $x$ or $y$ to $z$, so
	$\gamma_z=0$ if $\bm w$ is supported on $x$, $y$ and their successors.
\end{proof}
Note that $\gamma_z>0$ iff some vertex $u\neq x,y$ with $w_u<0$ has a walk to $z$ that does not
touch $x$ or $y$, and $\nu_z>0$ iff some vertex $u\neq x,y$ with $v_u>0$ does.

\begin{theorem}[Comparison]
	\label{thm:comparison}
	Let $\bm q$, $\bm r$, $\gamma$ and $\nu$ be as in Lemma~\ref{lemma:damped-decomp}, with
	$r(x),r(y)>0$, and let $z\neq x,y$.
	\begin{renumerate}
		\item\label{enu:comparison-pos} If $r(z)>0$, then
		      \[
			      \frac{q(z)}{r(z)}\leq\max\left(0,\frac{q(x)}{r(x)},\frac{q(y)}{r(y)}\right).
		      \]
		      Moreover, if $\max\bigl(q(x)/r(x),q(y)/r(y)\bigr)>0$, then
		      \[
			      \frac{q(z)}{r(z)}<\max\left(\frac{q(x)}{r(x)},\frac{q(y)}{r(y)}\right)
		      \]
		      provided that $\gamma_z>0$, or $\nu_z>0$, or $\Psi_{xz}=\Psi_{yz}=0$.
		\item\label{enu:comparison-zero} If $r(z)=0$, and $\bm w$ is supported on $x$, $y$ and their successors, then $q(z)=0$.
	\end{renumerate}
\end{theorem}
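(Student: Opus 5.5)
We plan to derive both parts directly from the Decomposition Lemma~\ref{lemma:damped-decomp}, combined with the mediant inequality of Lemma~\ref{lemma:mediant}, used in a three-term form. Fix $z\neq x,y$, and write $a=\Psi_{xz}\geq0$ and $b=\Psi_{yz}\geq0$, so that
\[
q(z)=a\,q(x)+b\,q(y)-\gamma_z,\qquad r(z)=a\,r(x)+b\,r(y)+\nu_z,
\]
with $\gamma_z,\nu_z\geq0$ and $r(x),r(y)>0$.

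For part~\ref{enu:comparison-pos}, let $m=\max\bigl(0,q(x)/r(x),q(y)/r(y)\bigr)$. Then $q(x)\leq m\,r(x)$ and $q(y)\leq m\,r(y)$. We also have $-\gamma_z\leq 0\leq m\,\nu_z$, since $m\geq0$. Multiplying by the nonnegative coefficients $a,b$ and summing gives $q(z)\leq m\,r(z)$, and dividing by $r(z)>0$ gives the first bound. The $0$ inside the maximum is exactly what absorbs the $\nu_z$ and $\gamma_z$ terms.

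For the strict bound, assume $M':=\max\bigl(q(x)/r(x),q(y)/r(y)\bigr)>0$, so that $M'=m$. We consider the three cases separately.
\begin{itemize}
\item If $\nu_z>0$, then $-\gamma_z\leq0<m\,\nu_z$, so the summed inequality is strict.
\item If $\gamma_z>0$, then $-\gamma_z<0\leq m\,\nu_z$, so it is strict as well.
\item If $a=b=0$, then $q(z)=-\gamma_z\leq0$ while $r(z)=\nu_z>0$ (since $r(z)>0$). Hence $q(z)/r(z)\leq0<M'$.
\end{itemize}
In each case the inequality $q(z)/r(z)\leq M'$ obtained above becomes strict. The only care needed is to keep track of which term supplies the strictness; everything else is sign bookkeeping.

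For part~\ref{enu:comparison-zero}, suppose $r(z)=0$. All terms of $r(z)=a\,r(x)+b\,r(y)+\nu_z$ are nonnegative and $r(x),r(y)>0$, so $a=b=0$ and $\nu_z=0$. The main obstacle is to turn $a=b=0$ into the statement ``no walk from $x$ or $y$ to $z$'', so that the final clause of Lemma~\ref{lemma:damped-decomp} yields $\gamma_z=0$. Suppose there were a walk of $M$ from $x$ or $y$ to $z$. Take the last vertex of $\{x,y\}$ on it; the remaining suffix is a last-passage walk from $x$ or $y$ to $z$ with positive weight, since $\beta>0$ and every step has $M_{uw}>0$. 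Then $\Psi_{xz}>0$ or $\Psi_{yz}>0$, which is a contradiction. So no such walk exists, the lemma gives $\gamma_z=0$ because $\bm w$ is supported on $x$, $y$ and their successors, and therefore $q(z)=0\cdot q(x)+0\cdot q(y)-0=0$.
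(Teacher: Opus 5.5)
Your proof is correct and follows the paper's route: the decomposition of Lemma~\ref{lemma:damped-decomp} combined with the mediant inequality. Your three-term version, which uses $m\geq0$ to get $-\gamma_z\leq m\nu_z$, simply replaces the paper's case split on the sign of $\Psi_{xz}q(x)+\Psi_{yz}q(y)$; in part~(ii) you argue as the paper does, and you also spell out why $\Psi_{xz}=\Psi_{yz}=0$ rules out any walk from $x$ or $y$ to $z$, a step the paper takes for granted.
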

\begin{proof}
	Let $\mu=\max\bigl(q(x)/r(x),q(y)/r(y)\bigr)$, $a=\Psi_{xz}q(x)+\Psi_{yz}q(y)$ and
	$b=\Psi_{xz}r(x)+\Psi_{yz}r(y)$, so that $q(z)=a-\gamma_z$ and $r(z)=b+\nu_z$ by
	Lemma~\ref{lemma:damped-decomp}.

	\ref{enu:comparison-pos} Assume $r(z)>0$. If $a\leq0$ (in particular, if $\Psi_{xz}=\Psi_{yz}=0$), then $q(z)\leq0$, so
	$q(z)/r(z)\leq0$, hence $q(z)/r(z)<\mu$ if $\mu>0$. If $a>0$, then $b>0$ because $r(x),r(y)>0$,
	and by Lemma~\ref{lemma:mediant}
	\[
		\frac{q(z)}{r(z)}=\frac{a-\gamma_z}{b+\nu_z}\leq\frac ab\leq\mu,
	\]
	where the first inequality is strict if $\gamma_z>0$ or $\nu_z>0$. This proves~(i).

	\ref{enu:comparison-zero} Assume $r(z)=0$. Then $b=0$, so $\Psi_{xz}=\Psi_{yz}=0$, that is, there is no walk from $x$ or $y$
	to $z$; thus $\gamma_z=0$ by Lemma~\ref{lemma:damped-decomp}, and $q(z)=0$.
\end{proof}

We now turn to updates. Let $M'$ be a nonnegative matrix with $\beta<1/\rho(M')$, $N'=(1-\beta
M')^{-1}$, and let $\bm r=\bm vN$, $\bm r'=\bm vN'$ be the damped spectral centralities of $M$ and $M'$
with the same nonnegative preference vector $\bm v$; we let $\bm\Delta=\bm r'-\bm r$. We say that
$M'$ is an \emph{admissible update of $M$ at $x$, $y$} if
\begin{equation}
	\label{eq:update}
	M'=M+\bm e_x^T\bm\delta_x+\bm e_y^T\bm\delta_y,
\end{equation}
where $\bm e_u$ denotes the characteristic row vector of $u$ and $\bm\delta_x$, $\bm\delta_y$ are
row vectors (in other words, only rows $x$ and $y$ change), and
\begin{equation}
	\label{eq:sign}
	(\bm\delta_x)_w\leq0\quad\text{and}\quad(\bm\delta_y)_w\leq0\qquad\text{for every $w\neq x,y$},
\end{equation}
that is, the update can increase only the entries of $M$ lying in the columns $x$ and $y$. Note that
if $(\bm\delta_x)_w<0$ then $M_{xw}>0$, since $M'$ is nonnegative; thus $\bm\delta_x$ and $\bm\delta_y$ can be nonzero only for $x$, $y$ and their successors. We shall see that adding an edge $x\adj y$ to an
undirected graph is an admissible update at $x$, $y$ both for Katz's index and for PageRank. The
\emph{update vector} of an admissible update is
\[
	\bm g=r'(x)\bm\delta_x+r'(y)\bm\delta_y,
\]
which is nonpositive outside of $\{x,y\}$ and may be nonzero only on $x$, $y$ and their successors.

\begin{lemma}[Increment]
	\label{lemma:damped-increment}
	If $M'$ is an admissible update of $M$ at $x$, $y$, then $\bm\Delta=\beta\bm gN$: the increment
	is the damped spectral centrality of $M$ with preference vector $\beta\bm g$.
\end{lemma}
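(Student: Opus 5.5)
The plan is a direct resolvent computation, the standard ``second resolvent identity'' argument. Since $\beta<1/\rho(M)$ and $\beta<1/\rho(M')$, both $1-\beta M$ and $1-\beta M'$ are invertible. Moreover $N=(1-\beta M)^{-1}$ and $N'=(1-\beta M')^{-1}$ satisfy
\[
	N'-N = N\bigl((1-\beta M)-(1-\beta M')\bigr)N' = \beta N'(M'-M)N .
\]
Which of the two orderings I use depends on which side carries the preference vector. Because $\bm r'=\bm vN'$ and I want an expression of the form $(\cdots)N$ on the right, I will use the ordering
\[
	N'-N=\beta N'(M'-M)N ,
\]
which follows from
\[
	N'^{-1}-N^{-1}=-\beta(M'-M)
\]
by multiplying on the left by $N'$ and on the right by $N$.

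Multiplying on the left by $\bm v$ gives
\[
	\bm\Delta=\bm r'-\bm r=\beta\,\bm vN'(M'-M)N=\beta\,\bm r'(M'-M)N .
\]
Now I substitute the update form~\eqref{eq:update}, namely $M'-M=\bm e_x^T\bm\delta_x+\bm e_y^T\bm\delta_y$. For any row vector $\bm s$ we have $\bm s\,\bm e_u^T=s(u)$, a scalar. Hence
\[
	\bm r'(M'-M)=r'(x)\bm\delta_x+r'(y)\bm\delta_y=\bm g ,
\]
and therefore $\bm\Delta=\beta\bm gN$. This is exactly the damped spectral centrality of $M$ with preference vector $\beta\bm g$.

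The sign condition~\eqref{eq:sign} is not needed for the identity itself. It only ensures that $\bm g$ satisfies the hypothesis of Lemma~\ref{lemma:damped-decomp}, i.e.\ it is nonpositive outside $\{x,y\}$ once $r'(x),r'(y)\ge0$. That holds because $\bm v\ge0$ and $N'\ge0$.

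There is no real obstacle. The only points of care are choosing the order of the factors in the resolvent identity so that $\bm r'$ (not $\bm r$) appears, and justifying invertibility from the spectral-radius assumptions. A walk-based proof is also possible: split each walk of $M'$ at its last use of a modified row. The algebraic route above is shorter, though.
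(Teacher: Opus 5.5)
Your proof is correct and is essentially the paper's: the paper does the same resolvent computation directly on vectors, rewriting $\bm r'(1-\beta M')=\bm r(1-\beta M)$ as $\bm r'(1-\beta M)=\bm r(1-\beta M)+\beta\bm r'(M'-M)$ and multiplying by $N$. That is your identity $N'-N=\beta N'(M'-M)N$ premultiplied by $\bm v$, and the middle term of your first display is really $\beta N(M'-M)N'$, but both orderings equal $N'-N$, so your chain of equalities still holds.
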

\begin{proof}
	From $\bm r'(1-\beta M')=\bm v=\bm r(1-\beta M)$, adding and subtracting $\beta \bm r' M$ from the left-hand side we obtain $\bm r'(1-\beta M)-\beta \bm r'(M'-M)=\bm r(1-\beta M)$, or equivalently $\bm r'(1-\beta M)=\bm r(1-\beta M)+ \beta \bm r'(M'-M)$. Multiplying both sides by $N=(1-\beta M)^{-1}$ we obtain $\bm r'=\bm r+\beta\bm r'(M'-M)N$,
	and $\bm r'(M'-M)=r'(x)\bm\delta_x+r'(y)\bm\delta_y=\bm g$ by~(\ref{eq:update}). So $\bm r'=\bm r + \beta \bm g N$, and the lemma follows.
\end{proof}

Thus, Lemma~\ref{lemma:damped-decomp} and Theorem~\ref{thm:comparison} apply to $\bm q=\bm\Delta$
and $\bm r$; in the following, $\gamma$ and $\nu$ denote the corrections of
Lemma~\ref{lemma:damped-decomp} for $\bm w=\beta\bm g$ and $\bm v$. Since
$r'(z)/r(z)=1+\Delta_z/r(z)$, we obtain:

\begin{theorem}[Updates]
	\label{thm:damped}
	Let $M'$ be an admissible update of $M$ at $x$, $y$, assume $r(x),r(y)>0$, and let $z\neq x,y$.
	\begin{renumerate}
		\item If $r(z)>0$, then
		      \[
			      \frac{r'(z)}{r(z)}\leq\max\left(1,\frac{r'(x)}{r(x)},\frac{r'(y)}{r(y)}\right).
		      \]
		      Moreover, if $\max\bigl(r'(x)/r(x),r'(y)/r(y)\bigr)>1$, then
		      \[
			      \frac{r'(z)}{r(z)}<\max\left(\frac{r'(x)}{r(x)},\frac{r'(y)}{r(y)}\right)
		      \]
		      provided that $\gamma_z>0$, or $\nu_z>0$, or $\Psi_{xz}=\Psi_{yz}=0$.
		\item If $r(z)=0$, then $r'(z)=0$.
	\end{renumerate}
\end{theorem}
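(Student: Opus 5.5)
The plan is to reduce Theorem~\ref{thm:damped} to Theorem~\ref{thm:comparison}, applied with $\bm q=\bm\Delta$ and $\bm r=\bm vN$. By Lemma~\ref{lemma:damped-increment}, $\bm\Delta=\beta\bm gN$. So we may take $\bm w=\beta\bm g$ in Lemma~\ref{lemma:damped-decomp}, provided its hypotheses hold.

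First I check those hypotheses. By the sign condition~(\ref{eq:sign}), $\bm g=r'(x)\bm\delta_x+r'(y)\bm\delta_y$ is nonpositive outside $\{x,y\}$. This uses $r'(x),r'(y)\geq0$, which holds because $\bm r'=\bm vN'$ with $\bm v\geq0$ and $N'$ nonnegative (as $\beta<1/\rho(M')$). Since $\beta>0$, $\bm w=\beta\bm g$ is also nonpositive outside $\{x,y\}$. We already noted that $\bm\delta_x$ and $\bm\delta_y$ are supported on $x$, $y$ and their successors, so $\bm w$ is too. Finally $\bm v\geq0$, and $r(x),r(y)>0$ is assumed. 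So both parts of Theorem~\ref{thm:comparison} apply.

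For part~(i), assume $r(z)>0$. Write $r'(u)/r(u)=1+\Delta_u/r(u)$ for $u\in\{x,y,z\}$; this is valid since all three denominators are positive. The first inequality of Theorem~\ref{thm:comparison}(i) reads $\Delta_z/r(z)\leq\max\bigl(0,\Delta_x/r(x),\Delta_y/r(y)\bigr)$. Adding $1$ to both sides, and using that adding a constant commutes with the maximum, gives the claimed bound with $1$ in place of $0$.

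For the strict part, the hypothesis $\max\bigl(r'(x)/r(x),r'(y)/r(y)\bigr)>1$ is equivalent to $\max\bigl(\Delta_x/r(x),\Delta_y/r(y)\bigr)>0$. Under any of the conditions $\gamma_z>0$, $\nu_z>0$, or $\Psi_{xz}=\Psi_{yz}=0$, Theorem~\ref{thm:comparison}(i) then gives strict inequality, and we translate back by adding $1$. Here $\gamma$ and $\nu$ are exactly the corrections for $\bm w=\beta\bm g$ and $\bm v$, as stated before the theorem.

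For part~(ii), assume $r(z)=0$. Theorem~\ref{thm:comparison}(ii) applies because $\bm w$ is supported on $x$, $y$ and their successors, so $\Delta_z=0$. Hence $r'(z)=r(z)+\Delta_z=0$.

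The only delicate step is the sign of $\bm g$ outside $\{x,y\}$, which needs nonnegativity of $\bm r'$. That in turn rests on $N'=\sum_k\beta^kM'^k\geq0$, which follows from $M'\geq0$ and $\beta<1/\rho(M')$. Everything else is a direct translation of Theorem~\ref{thm:comparison}.
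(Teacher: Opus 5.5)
Your proposal is correct and is exactly the paper's argument: by Lemma~\ref{lemma:damped-increment} you apply Theorem~\ref{thm:comparison} to $\bm q=\bm\Delta=\beta\bm gN$ and $\bm r=\bm vN$, then shift by $1$ using $r'(u)/r(u)=1+\Delta_u/r(u)$. You also check explicitly that $\bm g\leq0$ outside $\{x,y\}$ because $r'(x),r'(y)\geq0$ (from $N'\geq0$); the paper leaves this implicit when it introduces the update vector, and your justification is the right one.
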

\begin{proof}
	Using Lemma~\ref{lemma:damped-increment}, we have $\bm\Delta=\beta\bm gN$, so we can
	apply Theorem~\ref{thm:comparison} with $\bm q=\bm\Delta$, noting that
	\[
		\max\left(\frac{r'(x)}{r(x)},\frac{r'(y)}{r(y)}\right)=1+\max\left(\frac{\Delta_x}{r(x)},\frac{\Delta_y}{r(y)}\right). \qed
	\]
\end{proof}

In words: no vertex can gain, in relative terms, more than the best of the two endpoints, and if
neither endpoint gains score then no vertex does. The rank-theoretic consequence, by the argument of
Theorem~\ref{thm:ratio-semi-mon}, is:

\begin{corollary}
	\label{cor:damped-rank}
	Let $M'$ be an admissible update of $M$ at $x$, $y$, assume $r(x),r(y)>0$, and that $r'(x)>r(x)$
	or $r'(y)>r(y)$; let $e\in\{x,y\}$ maximize $r'(e)/r(e)$. Then, for every $z\neq x,y$ with
	$r(z)\leq r(e)$, we have $r'(z)\leq r'(e)$, and $r'(z)<r'(e)$ if $r(z)<r(e)$ or if the condition
	for strict inequality of Theorem~\ref{thm:damped} holds for $z$. In particular, $\bm r$ is rank
	semi-monotone at $x$, $y$ (Definition~\ref{def:rank-semi}), and strictly so if the condition
	holds for every $z\neq x,y$ with $r(z)>0$.
\end{corollary}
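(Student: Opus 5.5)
The plan is to combine Theorem~\ref{thm:damped} with the chain of inequalities in the proof of Theorem~\ref{thm:ratio-semi-mon}, handling separately the vertices with $r(z)>0$ and those with $r(z)=0$. First fix $e\in\{x,y\}$ maximizing $r'(e)/r(e)$. The hypothesis that $r'(x)>r(x)$ or $r'(y)>r(y)$ gives
\[
	\mu:=\frac{r'(e)}{r(e)}=\max\left(\frac{r'(x)}{r(x)},\frac{r'(y)}{r(y)}\right)>1,
\]
so $\max(1,\mu)=\mu$ and the first bound of Theorem~\ref{thm:damped} becomes $r'(z)/r(z)\leq\mu$ for every $z\neq x,y$ with $r(z)>0$. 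Note also that $r'(e)=\mu\,r(e)>r(e)>0$.

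Now take $z\neq x,y$ with $r(z)\leq r(e)$.

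\emph{Case $r(z)>0$.} Write
\[
	r'(z)=r(z)\frac{r'(z)}{r(z)}\leq r(z)\,\mu\leq r(e)\,\mu=r'(e).
\]
The first inequality is strict exactly when the strictness condition of Theorem~\ref{thm:damped} holds for $z$, that is, when $\gamma_z>0$, or $\nu_z>0$, or $\Psi_{xz}=\Psi_{yz}=0$. This step uses $\mu>1$. The second inequality is strict when $r(z)<r(e)$, because $\mu>0$.

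\emph{Case $r(z)=0$.} Part~(ii) of Theorem~\ref{thm:damped} gives $r'(z)=0<r'(e)$. So the inequality is strict here unconditionally.

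It remains to read off rank semi-monotonicity at $x$, $y$ in the sense of Definition~\ref{def:rank-semi}, using the endpoint $e$. From the two cases, $r(z)<r(e)$ implies $r'(z)<r'(e)$, and $r(z)=r(e)$ implies $r'(z)\leq r'(e)$. For the strict version, assume the strictness condition holds for every $z$ with $r(z)>0$. Then $r(z)\leq r(e)$ always yields $r'(z)<r'(e)$, since vertices with $r(z)=0$ are covered by the second case.

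I do not expect a real obstacle: this is bookkeeping on top of Theorem~\ref{thm:damped}. The only points needing care are:
\begin{itemize}
	\item the hypothesis $\mu>1$, which is what removes the spurious $1$ inside the maximum of Theorem~\ref{thm:damped} and what activates its strict inequality;
	\item the separate treatment of vertices with $r(z)=0$, where a ratio is undefined. For these, Theorem~\ref{thm:damped}(ii) combined with $r'(e)>0$ gives strictness for free.
\end{itemize}
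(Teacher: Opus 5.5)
Your proposal is correct and follows the paper's proof: you treat $r(z)=0$ via Theorem~\ref{thm:damped}(ii), and otherwise use the chain $r'(z)=r(z)\,(r'(z)/r(z))\leq r(z)\,\mu\leq r'(e)$ with the same two sources of strictness, and you usefully make explicit that $\mu>1$ both removes the $1$ from the maximum and guarantees $r'(e)>0$. One small wording fix: the first inequality is strict \emph{if}, not ``exactly when,'' the condition of Theorem~\ref{thm:damped} holds, since that condition is only sufficient; your argument uses only this direction, so nothing else changes.
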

\begin{proof}
	If $r(z)=0$ then $r'(z)=0<r'(e)$ by Theorem~\ref{thm:damped}(ii). Otherwise
	$r'(z)=r(z)\,\bigl(r'(z)/r(z)\bigr)\leq r(z)\,\bigl(r'(e)/r(e)\bigr)\leq r'(e)$, the first
	inequality being strict under the condition and the second one if $r(z)<r(e)$.
\end{proof}

As in~\cite{BLVRMCM}, the hypothesis that at least one endpoint gains score is trivial for Katz's
index, and for PageRank it follows from the following corollary, the analogue of the observation of
Section~\ref{sec:defs} that strict ratio semi-monotonicity implies score semi-monotonicity when the
sum of the scores does not decrease:

\begin{corollary}
	\label{cor:damped-score}
	Let $M'$ be an admissible update of $M$ at $x$, $y$ such that $\sum_zr'(z)\geq\sum_zr(z)$ (e.g.,
	because $M$ and $M'$ are row-stochastic and $\bm v$ is a distribution). If $\gamma_u>0$ for
	some $u\neq x,y$, then $r'(x)>r(x)$ or $r'(y)>r(y)$.
\end{corollary}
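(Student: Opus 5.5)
We argue by contradiction. Suppose that $r'(x)\leq r(x)$ and $r'(y)\leq r(y)$, that is, $\Delta_x\leq0$ and $\Delta_y\leq0$, and set $\mu=\max\bigl(\Delta_x/r(x),\Delta_y/r(y)\bigr)\leq0$. By Lemma~\ref{lemma:damped-increment} the increment $\bm\Delta=\beta\bm gN$ is a damped spectral centrality of $M$ whose preference vector $\bm w=\beta\bm g$ is nonpositive outside $\{x,y\}$ and supported on $x$, $y$ and their successors. Hence Lemma~\ref{lemma:damped-decomp} applies to $\bm q=\bm\Delta$ and $\bm r$. The plan is to show that $\Delta_z\leq0$ for every vertex $z$, and that $\Delta_u<0$ for the given $u$. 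Then $\sum_z r'(z)<\sum_z r(z)$, which contradicts the hypothesis.

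We handle the vertices $z\neq x,y$ directly through the decomposition, and there is no need to go through ratios. Lemma~\ref{lemma:damped-decomp} gives
\[
\Delta_z=\Psi_{xz}\Delta_x+\Psi_{yz}\Delta_y-\gamma_z .
\]
Here $\Psi_{xz},\Psi_{yz}\geq0$ and $\gamma_z\geq0$, so $\Delta_z\leq0$. This also covers the case $r(z)=0$, so Theorem~\ref{thm:damped}(ii) is not needed, although it is consistent with this: there $\Delta_z=0$. For $z=u$ we have $\gamma_u>0$, so $\Delta_u\leq-\gamma_u<0$. Together with $\Delta_x,\Delta_y\leq0$, summing over all vertices gives $\sum_z\Delta_z<0$, which contradicts $\sum_z r'(z)\geq\sum_z r(z)$. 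Therefore $r'(x)>r(x)$ or $r'(y)>r(y)$.

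We also check the parenthetical example. If $M$ and $M'$ are row-stochastic and $\bm v$ is a distribution, then $\bm r\mathbf 1^T=\bm v(1-\beta M)^{-1}\mathbf 1^T=\bm v\mathbf 1^T/(1-\beta)$, since $M\mathbf 1^T=\mathbf 1^T$. The same holds for $M'$, so the two sums are equal.

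There is no real obstacle. The only point to be careful about is that the argument uses the signed decomposition of $\Delta_z$ directly, rather than the ratio bound of Theorem~\ref{thm:damped}(i). That bound requires $\mu>0$ for strictness, and it only gives $\Delta_z\leq0$ through the value $0$ in the maximum, whereas the decomposition gives both the inequality and its strictness at $u$ at once.
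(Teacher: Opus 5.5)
Your proof is correct and follows the paper's own argument. Assuming $\Delta_x,\Delta_y\leq0$, Lemma~\ref{lemma:damped-decomp} applied to $\bm q=\bm\Delta$ gives $\Delta_z\leq-\gamma_z\leq0$ for every $z\neq x,y$, strictly at $u$, so $\sum_z\Delta_z<0$, a contradiction; your check of the row-stochastic parenthetical is a correct extra the paper leaves implicit.
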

\begin{proof}
	If $\Delta_x,\Delta_y\leq0$, Lemma~\ref{lemma:damped-decomp} gives
	$\Delta_z\leq-\gamma_z\leq0$ for every $z\neq x,y$, with $\Delta_u<0$, so
	$\sum_z\Delta_z<0$.
\end{proof}

We remark that nothing in this section requires $M$ to be symmetric, so Theorem~\ref{thm:damped}
applies also to the addition of an arc $x\to y$ to a directed graph, in which case $\bm\delta_y=0$.
If also $(\bm\delta_x)_x\leq0$, as for Katz's index and PageRank, and $\Delta_y\geq0$ (which is always
true for these two measures~\cite{BLVRMCM}), then the largest relative increase is attained at the
target $y$. Indeed, the first-passage decomposition with $T=\{y\}$ gives
$N_{uy}N_{yx}=\Phi_{uy}N_{yy}N_{yx}\leq N_{yy}N_{ux}$ for every $u$, whence $r(y)N_{yx}\leq
N_{yy}r(x)$ and, as $\bm g$ is nonpositive outside of $y$,
$N_{yy}\Delta_x=\beta\sum_w g_wN_{yy}N_{wx}\leq\beta\sum_w g_wN_{wy}N_{yx}=N_{yx}\Delta_y$; if
$N_{yx}=0$ then $\Delta_x\leq0\leq\Delta_y$, and otherwise $\Delta_x/r(x)\leq
N_{yx}\Delta_y/(N_{yy}r(x))\leq\Delta_y/r(y)$. Thus, if $r(x),r(y)>0$, Theorem~\ref{thm:damped}
yields $r'(z)/r(z)\leq r'(y)/r(y)$ for every $z$ with $r(z)>0$, a ``ratio monotonicity'' property of
the directed case which implies, by the argument of Theorem~\ref{thm:ratio-semi-mon}, the rank
monotonicity results of~\cite{BLVRMCM} (in strict form, under the condition for strict inequality of
Theorem~\ref{thm:damped}).

\section{Katz's Index}
\label{sec:katz}

Katz's index~\cite{katz} with attenuation factor $\beta$ and preference vector $\bm v$ is
\[
	\bm c=\bm v\sum_{k\geq0}\beta^kA^k=\bm v(1-\beta A)^{-1},
\]
where $0<\beta<1/\rho(A)$; the original definition has $\bm v=\mathbf 1$, whereas
Hubbell~\cite{hubbell} allows for arbitrary positive preference vectors. It is the damped spectral
centrality of Section~\ref{sec:damped} for $M=A$, and $c(z)=\sum_u v_uN_{uz}$, where $N=(1-\beta
A)^{-1}$ is the matrix of Section~\ref{sec:walks} for $W=\beta A$; in other words, $c(z)$ is the sum
over all walks ending at $z$ of $\beta$ raised to the length of the walk, weighted by the preference
of the starting vertex. Since $\rho(A)\leq\rho(A')$, in order for both $\bm c$ and $\bm c'$ to be
defined we assume $0<\beta<1/\rho(A')$. As $A'=A+\bm e_x^T\bm e_y+\bm e_y^T\bm e_x$, adding the edge
$x\adj y$ is an admissible update at $x$, $y$ with $\bm\delta_x=\bm e_y$ and $\bm\delta_y=\bm e_x$,
whose update vector is $\bm g=c'(x)\bm e_y+c'(y)\bm e_x$. Lemma~\ref{lemma:damped-increment} then
reads (recall that $N$ is symmetric):

\begin{lemma}
	\label{lemma:katz-increment}
	For every vertex $z$,
	\[
		c'(z)-c(z)=\beta\bigl(N_{zx}c'(y)+N_{zy}c'(x)\bigr).
	\]
\end{lemma}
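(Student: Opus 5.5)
This is a direct instantiation of Lemma~\ref{lemma:damped-increment}. Take $M=A$, $M'=A'$ and the preference vector $\bm v$, so that $\bm r=\bm c$ and $\bm r'=\bm c'$; both are well defined because $0<\beta<1/\rho(A')\leq1/\rho(A)$.

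First, check that adding the edge is an admissible update at $x$, $y$. Since $A'-A=\bm e_x^T\bm e_y+\bm e_y^T\bm e_x$, only rows $x$ and $y$ change, with $\bm\delta_x=\bm e_y$ and $\bm\delta_y=\bm e_x$. Both vectors are zero outside $\{x,y\}$, so the sign condition~(\ref{eq:sign}) holds trivially. The update vector is therefore $\bm g=c'(x)\bm e_y+c'(y)\bm e_x$.

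Lemma~\ref{lemma:damped-increment} then gives $\bm c'-\bm c=\beta\bm gN$. Reading off the $z$-th entry,
\[
	c'(z)-c(z)=\beta\sum_w g_wN_{wz}=\beta\bigl(c'(y)N_{xz}+c'(x)N_{yz}\bigr).
\]
Since $A$ is symmetric, $N=(1-\beta A)^{-1}$ is symmetric as well. Hence $N_{xz}=N_{zx}$ and $N_{yz}=N_{zy}$, which is exactly the claimed identity.

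There is no real obstacle here. The only point to watch is the row/column convention: row vectors multiply $N$ on the left, so the coefficients appear naturally as $N_{xz}$ and $N_{yz}$, and symmetry of $N$ is what converts them to the stated $N_{zx}$ and $N_{zy}$.

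For a self-contained check, one can also avoid the lemma. Start from $\bm c'(1-\beta A')=\bm v=\bm c(1-\beta A)$ and rewrite it as $\bm c'(1-\beta A)=\bm c(1-\beta A)+\beta\bm c'(A'-A)$. Multiplying on the right by $N$ gives the same formula.
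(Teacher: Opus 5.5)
Your proposal is correct and matches the paper's proof: instantiate Lemma~\ref{lemma:damped-increment} with $\bm g=c'(x)\bm e_y+c'(y)\bm e_x$, read off the $z$-th entry, and use the symmetry of $N$. Your extra check that the update is admissible and your direct derivation from $\bm c'(1-\beta A')=\bm c(1-\beta A)$ are both sound.
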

\begin{proof}
	By Lemma~\ref{lemma:damped-increment},
	$c'(z)-c(z)=\beta\sum_w g_wN_{wz}=\beta\bigl(c'(x)N_{yz}+c'(y)N_{xz}\bigr)$, and $N$ is
	symmetric.
\end{proof}
The lemma has a simple combinatorial reading: a walk in $G'$ that is not a walk in $G$ can be split
at its first traversal of the new edge into a walk in $G$ ending in $x$ (or $y$), followed by the
new edge and by an arbitrary walk in $G'$ starting from $y$ (or $x$).

\begin{theorem}
	\label{thm:katz}
	Let $G$ be an undirected graph, $0<\beta<1/\rho(A')$, and $\bm v$ be a nonnegative preference
	vector such that $c(x),c(y)>0$. Then:
	\begin{renumerate}
		\item $c'(z)\geq c(z)$ for every vertex $z$, and $c'(x)>c(x)$, $c'(y)>c(y)$;
		\item for every $z\neq x,y$ with $c(z)>0$,
		      \[
			      \frac{c'(z)}{c(z)}\leq\max\left(\frac{c'(x)}{c(x)},\frac{c'(y)}{c(y)}\right),
		      \]
		      with strict inequality if $v_z>0$; 
			  if $c(z)=0$ then $c'(z)=0$.
	\end{renumerate}
	In particular, if $\bm v$ is everywhere positive (e.g., $\bm v=\mathbf 1$), Katz's index is
	score monotone and strictly ratio semi-monotone on all undirected graphs, for every
	$0<\beta<1/\rho(A')$, and thus it is score semi-monotone and strictly rank semi-monotone on the
	same graphs. If $\bm v$ is just nonnegative and all scores are nonzero, Katz's index is ratio
	semi-monotone and rank semi-monotone.
\end{theorem}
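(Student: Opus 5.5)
The plan is to specialise the Updates machinery of Section~\ref{sec:damped} with $M=A$. Adding $x\adj y$ is an admissible update at $x$, $y$ with $\bm\delta_x=\bm e_y$ and $\bm\delta_y=\bm e_x$. Here the update vector $\bm g=c'(x)\bm e_y+c'(y)\bm e_x$ is \emph{nonnegative} and supported on $\{x,y\}$. Consequently the correction $\gamma_z=-\beta\sum_u g_u\nott N_{uz}$ of Lemma~\ref{lemma:damped-decomp} vanishes identically, because $\nott N_{uz}=0$ for $u\in\{x,y\}$.

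For (i), I would first observe that $N\geq0$ and $\bm v\geq0$. Hence $\bm c'=\bm v N'\geq \bm vN=\bm c$, since $A'\geq A$ entrywise and $N'=\sum_k\beta^kA'^k\geq\sum_k\beta^kA^k$; in particular $c'(x),c'(y)>0$. Lemma~\ref{lemma:katz-increment} then gives
\[
c'(x)-c(x)=\beta\bigl(N_{xx}c'(y)+N_{xy}c'(x)\bigr)\geq\beta N_{xx}c'(y)\geq\beta c'(y)>0,
\]
using $N_{xx}\geq1$ (empty walk), and symmetrically for $y$. Thus both endpoints strictly gain.

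For (ii), I would apply Theorem~\ref{thm:damped} with $\bm r=\bm c$. Since $c'(x)>c(x)$, the maximum of the two endpoint ratios exceeds $1$, so the bound $\max(1,\cdot,\cdot)$ collapses to the maximum of the two endpoint ratios. The case $c(z)=0$ is Theorem~\ref{thm:damped}(ii). For strictness when $v_z>0$, note that the correction $\nu_z=\sum_u v_u\nott N_{uz}\geq v_z\nott N_{zz}\geq v_z>0$, because $z\notin\{x,y\}$ and the empty walk contributes $1$ to $\nott N_{zz}$. This is exactly the condition for strict inequality in Theorem~\ref{thm:damped}.

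For the final claims: if $\bm v>0$ everywhere then all $c(z)>0$ and all $v_z>0$. Part (i) gives score monotonicity, hence score semi-monotonicity, and (ii) gives strict ratio semi-monotonicity, hence strict rank semi-monotonicity by Theorem~\ref{thm:ratio-semi-mon}. If $\bm v\geq0$ with all scores nonzero, (ii) without strictness gives ratio semi-monotonicity and thus rank semi-monotonicity. Connectivity is never used.

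The only point needing care is the strictness step: one must check that $\nu_z>0$ is exactly what Theorem~\ref{thm:damped} requires, and that no hidden positivity assumption on $\Psi$ intervenes. Since $\gamma\equiv0$, the strict inequality comes solely from $\nu_z$, which is handled by the empty-walk argument above.
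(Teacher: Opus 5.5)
Your proposal is correct and follows the paper's proof essentially step by step: (i) via Lemma~\ref{lemma:katz-increment} and $N_{xx}\geq1$, (ii) via Theorem~\ref{thm:damped} with strictness from $\nu_z\geq v_z\nott N_{zz}\geq v_z$, and the final claims via $\bm c\geq\bm v$ and Theorem~\ref{thm:ratio-semi-mon}. The differences are cosmetic. You obtain $\bm c'\geq\bm c$ from $N'\geq N$ entrywise rather than from the increment formula. You also observe, correctly though it is not needed, that $\gamma\equiv0$ here because $\bm g$ is supported on $\{x,y\}$.
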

\begin{proof}
	(i). Lemma~\ref{lemma:katz-increment} gives $\bm c'\geq\bm c$, and $c'(x)-c(x)\geq\beta c'(y)N_{xx}$; since $N_{xx}\geq1$ (because of the empty walk) and $c'(z)\geq c(z)$ for all $z$, we deduce $c'(x)-c(x)\geq\beta c(y)>0$; symmetrically for $y$. 
	
	\noindent (ii). We apply Theorem~\ref{thm:damped}, noting that $\nu_z=\sum_u v_u \nott N_{uz}\geq v_z\nott N_{zz}\geq v_z$. 
	
	\noindent The last
	statements follow from~(i), (ii) and Theorem~\ref{thm:ratio-semi-mon}, as $\bm c\geq\bm v$.
\end{proof}

Strictness may fail if the preference vector has zero entries: if $z$ is a vertex adjacent only
to $x$ and $\bm v=\bm e_x$, then $c(z)=\beta c(x)$ and $c'(z)=\beta c'(x)$, because every walk from
$x$ to $z$ is a walk from $x$ to $x$ followed by the edge $x\adj z$, so $c'(z)/c(z)=c'(x)/c(x)$.

We remark that Katz's index is \emph{not} $\delta$-semi-monotone, which shows that the additive and
the multiplicative variants are genuinely different properties. Consider the star $G=K_{1,n}$, for
$n\geq3$, with center $z$, and let $x$ and $y$ be two leaves. Every closed walk of $G$ from $z$ is a
sequence of round trips to a leaf, so $N_{zz}=\sum_k(n\beta^2)^k=1/(1-n\beta^2)$; moreover,
$N_{zx}=N_{zy}=\beta N_{zz}$, $N_{xx}=1+\beta^2N_{zz}$ and $N_{xy}=\beta^2N_{zz}$, and $c'(x)=c'(y)$ by
symmetry, so by Lemma~\ref{lemma:katz-increment}
\[
	\Delta_z=c'(z)-c(z)=\beta \left(N_{zx}c'(y)+N_{zy}c'(x)\right)=2\beta^2N_{zz}c'(x), 
\]
and
\[
	\Delta_x=c'(x)-c(x)=\beta \left(N_{xx}c'(y)+N_{xy}c'(x)\right)=\beta c'(x)\left(1+2\beta^2N_{zz}\right).
\]
Subtracting the two equations we obtain
\begin{multline*}
	\Delta_z-\Delta_x=\beta c'(x)\bigl(2\beta N_{zz}-1-2\beta^2N_{zz}\bigr)=\\
	=\beta c'(x)\left(\frac{2\beta}{1-n\beta^2}-1-\frac{2\beta^2}{1-n\beta^2}\right)
	=\beta c'(x)\,\frac{(n-2)\beta^2+2\beta-1}{1-n\beta^2},
\end{multline*}
which is positive iff $\beta>1/\bigl(1+\sqrt{n-1}\bigr)$ (the denominator is positive because $\rho(A')>\rho(A)=\rho(K_{1,n})=\sqrt{n}$ so $\beta<1/\rho(A')<1/\sqrt{n}$). 

\smallskip
To bound $\rho(A')$, note that, by symmetry, the dominant eigenvector of $A'$ (which is unique) has the same value, say $a$, on all the leaves of the star except for $x$ and $y$, some other value, say $b$, on $x$ and $y$, and some value, say $c$, at the center. The eigenvector equations for the eigenvalue $\rho$ give:
\begin{align*}
	\rho a &= c\\
	\rho b &= b + c\\
	\rho c &= 2b + (n-2) a.
\end{align*}
Solving this system we have $a=c/\rho$, $b=c/(\rho-1)$, and substituting into the third equation gives $\rho c = 2c/(\rho-1) + (n-2)c/\rho$, which simplifies to the polynomial equation $p(\rho)=\rho^3 - \rho^2 - n\rho + n - 2 = 0$: thus, $\rho(A')$ is a root of $p$. Writing $s=\sqrt{n-1}$, a direct computation gives $p(1+s)=2(n-2)>0$ and $p'(\rho)=3\rho^2-2\rho-n>0$ for $\rho\geq1+s$, so $p$ is positive and increasing on $[1+\sqrt{n-1}\..\infty)$, which implies that $\rho(A')<1+\sqrt{n-1}$.

\smallskip
Hence, for every
$\beta$ in the nonempty interval $\bigl(1/(1+\sqrt{n-1})\..1/\rho(A')\bigr)$ the center gains more
than both endpoints of the new edge. For instance, for $n=3$ this happens for
$\beta\in(0.4142\..0.4608)$. The same phenomenon can be observed numerically for eigenvector
centrality and PageRank on larger graphs.

\section{Eigenvector Centrality}
\label{sec:eigenvector}

Let $G$ be connected, and let $\lambda=\rho(A)$ and $\lambda'=\rho(A')$: since $A'\geq A$ and
$A'\neq A$, by the Perron--Frobenius theorem~\cite{berman} we must have $\lambda'>\lambda$. Eigenvector
centrality~\cite{landau,berman} is the (unique, positive) dominant eigenvector $\bm v$ of $A$,
defined up to a scaling factor. Rank monotonicity properties do not depend on the scaling factor,
whereas score monotonicity properties depend on the normalization chosen; we will consider, as
in~\cite{boldi_furia_vigna_2023}, normalization in norm $\ell_1$, in norm $\ell_2$, and the
projection of the vector $\mathbf 1$ on the dominant eigenspace (i.e., $\bm v\,\langle\mathbf
1,\bm v\rangle$ for $\|\bm v\|_2=1$).

\begin{lemma}
	\label{lemma:eig-identity}
	Let $R=(\lambda'-A)^{-1}=\sum_{k\geq0}A^k/(\lambda')^{k+1}$. Then, for every vertex $z$,
	\[
		v'(z)=R_{zx}v'(y)+R_{zy}v'(x)\qquad\text{and}\qquad v(z)=(\lambda'-\lambda)\sum_uR_{zu}v(u).
	\]
\end{lemma}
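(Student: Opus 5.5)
Both identities come from rewriting an eigenvector equation as a resolvent equation with respect to the \emph{old} matrix $A$ at the \emph{new} eigenvalue $\lambda'$. First, $R$ is well defined and nonnegative. Since $\lambda'>\lambda=\rho(A)$, the Neumann series $\sum_{k\geq0}A^k/(\lambda')^{k+1}$ converges and equals $(\lambda'-A)^{-1}$. It is $N/\lambda'$ for $W=A/\lambda'$ in the notation of Section~\ref{sec:walks}, and it is symmetric because $A$ is. We treat $\bm v$ and $\bm v'$ as column vectors here, which is harmless since $A$ and $A'$ are symmetric.

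For the first identity, start from $A'\bm v'=\lambda'\bm v'$. Write $A'=A+\bm e_x^T\bm e_y+\bm e_y^T\bm e_x$. Then for every $z$,
\[
	(A\bm v')_z+[z=x]\,v'(y)+[z=y]\,v'(x)=\lambda' v'(z),
\]
that is, $(\lambda'-A)\bm v'=v'(y)\bm e_x^T+v'(x)\bm e_y^T$. Multiply on the left by $R$ and read off entry $z$: this gives $v'(z)=R_{zx}v'(y)+R_{zy}v'(x)$. This is the analogue of Lemma~\ref{lemma:katz-increment}, except that the whole new vector, not just an increment, is generated by a preference vector supported on $\{x,y\}$. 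It is what lets us later call $\bm v'$ a damped spectral centrality of $G$.

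For the second identity, start from $A\bm v=\lambda\bm v$. Then $(\lambda'-A)\bm v=(\lambda'-\lambda)\bm v$, and applying $R$ gives $\bm v=(\lambda'-\lambda)R\bm v$, i.e., $v(z)=(\lambda'-\lambda)\sum_uR_{zu}v(u)$.

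The only step needing care is invertibility of $\lambda'-A$ and the convergence of the series. Both follow from the strict inequality $\lambda'>\lambda$, which Perron--Frobenius gives for connected $G$ as recalled above. Everything else is linear algebra.
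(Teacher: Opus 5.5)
Your proposal is correct and follows essentially the same argument as the paper. Like the paper, you rewrite $A'\bm v'=\lambda'\bm v'$ as $(\lambda'-A)\bm v'=(A'-A)\bm v'$, whose right-hand side is supported on $\{x,y\}$, and $A\bm v=\lambda\bm v$ as $(\lambda'-A)\bm v=(\lambda'-\lambda)\bm v$, then multiply both by $R$, which exists because $\lambda'>\lambda$.
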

\begin{proof}
	Since $A'\bm v'=\lambda'\bm v'$, we have $(\lambda'-A)\bm v'=\lambda' \bm v'-A\bm v'=(A'-A)\bm v'$, and the latter
	vector has entry $v'(y)$ at $x$, entry $v'(x)$ at $y$, and is zero elsewhere. Multiplying by $R$
	gives the first identity. The second identity follows from
	$(\lambda'-A)\bm v=(\lambda'-\lambda)\bm v$.
\end{proof}
Note that the series defining $R$ converges because $\lambda'>\lambda=\rho(A)$, and that
$R=(\lambda')^{-1}(1-A/\lambda')^{-1}$: thus, up to the factor $1/\lambda'$, $R$ is the matrix $N$
of Section~\ref{sec:walks} for $W=A/\lambda'$. Since $R$ is symmetric, the lemma says that (the transposes of) $\bm v'$ and $\bm v$ are the damped spectral centralities of the \emph{old} graph with
damping factor $1/\lambda'$ and preference vectors $v'(y)\bm e_x+v'(x)\bm e_y$ and
$(\lambda'-\lambda)\bm v$, respectively: the new eigenvector is a Katz index of the old graph with a
preference vector concentrated on $x$ and $y$, and the old eigenvector is a Katz index of the old
graph with a positive preference vector. We are therefore in the setting of the comparison theorem
of Section~\ref{sec:damped}.

\begin{theorem}
	\label{thm:eigenvector}
	Eigenvector centrality is strictly ratio semi-monotone on connected undirected graphs,
	independently of the normalization chosen. Thus, it is strictly rank semi-monotone on the same
	graphs.
\end{theorem}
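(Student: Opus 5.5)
The plan is to apply the Comparison Theorem (Theorem~\ref{thm:comparison}) to the old graph, with $M=A$, $\beta=1/\lambda'$, and the two damped spectral centralities furnished by Lemma~\ref{lemma:eig-identity}. Since ratios $v'(z)/v(z)$ change only by a common positive factor when either vector is rescaled, it suffices to fix one normalization; this already gives independence of the normalization. We have $\beta=1/\lambda'<1/\lambda=\rho(A)^{-1}$, so the setting of Section~\ref{sec:damped} applies with $N=(1-A/\lambda')^{-1}=\lambda' R$. Set $\bm q=\bm v'$ with preference vector $\bm w=\bigl(v'(y)\bm e_x+v'(x)\bm e_y\bigr)/\lambda'$, and $\bm r=\bm v$ with preference vector $(\lambda'-\lambda)\bm v/\lambda'$. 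Here $\bm w$ is supported on $\{x,y\}$, hence zero (so nonpositive) elsewhere, and the preference vector of $\bm r$ is strictly positive because $G$ is connected and $\lambda'>\lambda$. By Perron--Frobenius, all entries of $\bm v$ and $\bm v'$ are positive, so $r(x),r(y)>0$ and $r(z)>0$ for every $z$.

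For $z\neq x,y$, Theorem~\ref{thm:comparison}(i) yields
\[
\frac{v'(z)}{v(z)}\le\max\left(0,\frac{v'(x)}{v(x)},\frac{v'(y)}{v(y)}\right).
\]
The maximum of the last two ratios is positive, since $\bm v'>0$. So we are in the strict case, and the inequality is strict as soon as $\nu_z>0$. By the note following Lemma~\ref{lemma:damped-decomp}, $\nu_z>0$ holds as soon as some $u\neq x,y$ with positive preference has a walk to $z$ avoiding $x$ and $y$. Taking $u=z$ with the empty walk gives $\nu_z\ge (\lambda'-\lambda)v(z)\nott N_{zz}/\lambda'>0$. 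Hence $v'(z)/v(z)<\max\bigl(v'(x)/v(x),v'(y)/v(y)\bigr)$ for every $z\neq x,y$.

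To conclude strict ratio semi-monotonicity we need one endpoint $e\in\{x,y\}$ that works simultaneously for all $z$. Let $e$ maximize $v'(e)/v(e)$; then the displayed strict bound is exactly the condition in Definition~\ref{def:ratio} with endpoint $e$. Theorem~\ref{thm:ratio-semi-mon} then gives strict rank semi-monotonicity.

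I expect no step to be a genuine obstacle. The only points needing care are verifying the hypotheses of the comparison theorem in this transposed and rescaled form: $R$ is symmetric, so the column identities of Lemma~\ref{lemma:eig-identity} are row-vector damped centralities, and the factor $\lambda'$ between $R$ and $N$ is absorbed into the preference vectors. Connectivity is used both for positivity of $\bm v$ (needed for $\nu_z>0$ and $r(x),r(y)>0$) and for $\lambda'>\lambda$.
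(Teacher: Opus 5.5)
Your proposal is correct and follows the paper's proof essentially verbatim. Like the paper, you apply Theorem~\ref{thm:comparison} with $M=A$, $\beta=1/\lambda'$, $\bm q=\bm v'$ and $\bm r=\bm v$, and get strictness from $\nu_z\geq(\lambda'-\lambda)v(z)\nott N_{zz}/\lambda'>0$ via the empty walk at $z$. Your explicit tracking of the factor $\lambda'$ between $R$ and $N$ just spells out what the paper covers by saying that scaling the preference vectors is irrelevant.
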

\begin{proof}
	The ratio $v'(z)/v(z)$ changes by the same multiplicative constant for all $z$ when 
	normalization is applied, so we can fix arbitrary normalizations. We apply
	Theorem~\ref{thm:comparison} with $M=A$, $\beta=1/\lambda'$, $\bm q=\bm v'$ and $\bm r=\bm v$
	(scaling the preference vectors is irrelevant). The hypotheses are satisfied because $\bm v$ is
	positive and $v'(y)\bm e_x+v'(x)\bm e_y$ is nonnegative and can be non-zero only on $\{x,y\}$; moreover
	$\max\bigl(v'(x)/v(x),v'(y)/v(y)\bigr)>0$ and, for every $z\neq x,y$,
	$\nu_z\geq(\lambda'-\lambda)v(z)\nott R_{zz}>0$, as $\nott R_{zz}\geq1/\lambda'$. Hence
	$v'(z)/v(z)<\max\bigl(v'(x)/v(x),v'(y)/v(y)\bigr)$ for every $z\neq x,y$, which is exactly
	strict ratio semi-monotonicity.
\end{proof}

A normalization can be seen as a function $\|\cdot\|$ from positive vectors to positive reals,
such that $\|t\cdot \bm v\|=t\cdot \|\bm v\|$ for all vectors $\bm v$ and all $t>0$. The normalization of a vector $\bm v$ is $\bm v/\|\bm v\|$. We say that the normalization is
\emph{monotone} if $\bm u\leq\bm w$ and $\bm u\neq\bm w$ imply $\|\bm u\|<\|\bm w\|$. Norms $\ell_1$
and $\ell_2$ (indeed, all $\ell_p$ norms with $p<\infty$) are monotone.

\begin{corollary}
	\label{cor:eig-score}
	Eigenvector centrality is score semi-monotone on connected undirected graphs for every monotone
	normalization; in particular, for norms $\ell_1$ and $\ell_2$.
\end{corollary}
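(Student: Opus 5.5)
We combine the strict ratio semi-monotonicity of Theorem~\ref{thm:eigenvector} with the observation of Section~\ref{sec:defs}: a strictly ratio semi-monotone centrality is score semi-monotone as soon as the normalization of the score vector does not decrease when the edge is added. So we fix a monotone normalization $\|\cdot\|$ and write $\bm v$, $\bm v'$ for the dominant eigenvectors of $A$ and $A'$ normalized so that $\|\bm v\|=\|\bm v'\|=1$; both are positive by Perron--Frobenius, since $G$ and $G'$ are connected. We argue by contradiction, assuming $v'(x)\leq v(x)$ and $v'(y)\leq v(y)$.

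Let $e\in\{x,y\}$ be the endpoint maximizing $v'(e)/v(e)$, and set $m=v'(e)/v(e)$. By assumption $m\leq1$. By Theorem~\ref{thm:eigenvector}, which holds for any normalization, we have $v'(z)/v(z)<m\leq1$ for every $z\neq x,y$. The two endpoints themselves satisfy $v'(x)\leq v(x)$ and $v'(y)\leq v(y)$ directly.

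It follows that $\bm v'\leq\bm v$ componentwise. The inequality is strict at every $z\neq x,y$, and such a $z$ exists: if $G$ had only the two vertices $x,y$, then connectivity would force $x\adj y$, contradicting non-adjacency. Hence $\bm v'\leq\bm v$ and $\bm v'\neq\bm v$, and monotonicity of the normalization gives $\|\bm v'\|<\|\bm v\|$, contradicting $\|\bm v'\|=\|\bm v\|=1$. Therefore $v'(x)>v(x)$ or $v'(y)>v(y)$. For the particular cases, the $\ell_1$ and $\ell_2$ norms (indeed all $\ell_p$ with $p<\infty$) are monotone on nonnegative vectors, since each coordinate enters strictly increasingly.

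There is no real obstacle beyond keeping the normalization consistent: Theorem~\ref{thm:eigenvector} must be applied to the \emph{normalized} vectors, which is allowed because normalization multiplies all ratios $v'(z)/v(z)$ by the same constant. We also need at least one vertex other than $x,y$ in order to get $\bm v'\neq\bm v$, which the connectivity argument above supplies.
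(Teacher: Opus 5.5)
Your proof is correct and follows the paper's argument essentially step for step. Assuming neither endpoint gains, you use the strict ratio semi-monotonicity of Theorem~\ref{thm:eigenvector} to get $\bm v'\leq\bm v$ with strict inequality at some $z\neq x,y$ (such a $z$ exists because $G$ is connected and $x\noadj y$). The strict decrease of the normalization then contradicts $\|\bm v'\|=\|\bm v\|=1$.
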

\begin{proof}
	Assume by contradiction that $v'(x)\leq v(x)$ and $v'(y)\leq v(y)$, where $\bm v$ and $\bm v'$
	are normalized. By Theorem~\ref{thm:eigenvector}, $v'(z)/v(z)<1$ for every $z\neq x,y$, so
	$\bm v'\leq\bm v$ and $\bm v'\neq\bm v$ (there is at least one vertex $z\neq x,y$ as $G$ is
	connected and $x\noadj y$). Thus $\|\bm v'\|<\|\bm v\|$, whereas both norms should be one.
\end{proof}

The projection of $\mathbf 1$ on the dominant eigenspace is not a monotone normalization, and indeed
in that case score semi-monotonicity fails:
\begin{theorem}
	\label{thm:eig-proj}
	Eigenvector centrality normalized by projecting $\mathbf 1$ on the dominant eigenspace is not
	score semi-monotone on connected undirected graphs.
\end{theorem}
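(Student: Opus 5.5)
Since the statement is negative, the plan is to exhibit one explicit connected graph $G$ and a pair of non-adjacent vertices $x,y$ for which both endpoints lose score under the projection normalization. This normalization assigns to $z$ the score $v(z)\langle\mathbf 1,\bm v\rangle=v(z)\|\bm v\|_1$, where $\|\bm v\|_2=1$. Writing $\bm u=\bm v/\|\bm v\|_1$ for the $\ell_1$-normalized vector, the score becomes $u(z)\,\|\bm v\|_1^2=u(z)/\|\bm u\|_2^2$. So we need
\[
\frac{u'(x)}{\|\bm u'\|_2^2}\le\frac{u(x)}{\|\bm u\|_2^2}
\qquad\text{and}\qquad
\frac{u'(y)}{\|\bm u'\|_2^2}\le\frac{u(y)}{\|\bm u\|_2^2}.
\]
By Corollary~\ref{cor:eig-score}, at least one of $u'(x)>u(x)$ and $u'(y)>u(y)$ holds, so a counterexample needs $\|\bm u'\|_2^2$ to increase at least as much as the better endpoint's $\ell_1$ score. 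In other words, the new edge must make the eigenvector markedly more concentrated, more ``peaked'', than before.

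This suggests the construction. Take a graph whose eigenvector is fairly spread out, and add an edge between two vertices $x,y$ that lie in a dense region, so that the new eigenvector localizes around $x$, $y$ and their neighbours. The candidate families are:
\begin{itemize}
\item two vertices that share many common neighbours, for instance the two non-adjacent vertices of $K_n$ minus an edge with a long path attached far away;
\item a complete graph minus an edge, joined to a large sparse appendage that carries much of the $\ell_1$ mass but little of the $\ell_2$ mass.
\end{itemize}

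I would first try $K_n$ minus one edge $\{x,y\}$, with a long pendant path hanging from one of the other clique vertices. Adding $x\adj y$ restores $K_n$, which raises $\lambda'$ and concentrates the mass on the clique. The point is that $\|\bm v\|_1$ drops because the path's share shrinks, and this can outweigh the growth of $v(x)$ within $\bm v$. By symmetry $v(x)=v(y)$ and $v'(x)=v'(y)$, which reduces the eigenvector computations to a few quotient equations, as in the star computation of Section~\ref{sec:katz}. This gives $\lambda$, $\lambda'$ as roots of small polynomials, or at worst it can be solved numerically for specific small $n$ and path length. The symmetry also means one inequality suffices, since it implies the other.

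The final step is to check the inequality $v'(x)\langle\mathbf 1,\bm v'\rangle<v(x)\langle\mathbf 1,\bm v\rangle$, with $\ell_2$-normalized vectors, by explicit numerical computation for the chosen small instance. A stated instance with numerical eigenvectors, reported to enough digits, is sufficient to prove non-semi-monotonicity.

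The main obstacle is finding an instance where the effect is actually realized. If the clique-plus-path family fails, I would instead try a small brute-force search over graphs on 5 to 7 vertices and all non-adjacent pairs, and then present the smallest counterexample found. I would include a simple hand-verifiable certificate, namely the characteristic polynomial factor and the eigenvector entries.
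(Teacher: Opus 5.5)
Your reduction is correct. With $\bm u=\bm v/\|\bm v\|_1$ the projection score is $u(z)/\|\bm u\|_2^2$, so a counterexample needs $\|\bm u'\|_2^2/\|\bm u\|_2^2\geq\max\bigl(u'(x)/u(x),u'(y)/u(y)\bigr)$. Your plan to verify a chosen instance numerically is also the form of the paper's proof. The gap is that the proof of a statement like this \emph{is} the counterexample, or an argument that one exists, and your proposal provides neither: it is a search plan.

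Moreover, your main concrete candidate fails. Take $K_4$ minus $x\adj y$ with a pendant path at one of the other two clique vertices. A path has spectral radius below $2$, whereas $\lambda>(1+\sqrt{17})/2\approx2.56$. So the Perron vector decays essentially geometrically along the path, with ratio $r$ given by $r+1/r=\lambda$. Solving the quotient equations, the projection score of $x$ goes from about $0.864$ to $1$ with no path. Even in the limit of an infinitely long path it only goes from about $0.945$ to about $1.077$. In that limit $\lambda\approx2.710$, $r\approx0.44$ before, and $\lambda'=(1+3\sqrt3)/2$, $r'=(\sqrt3-1)/2$ after. There $\langle\mathbf1,\hat{\bm v}\rangle$ drops only from about $2.350$ to about $2.258$, while $\hat v(x)$ grows from about $0.402$ to about $0.477$. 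Larger cliques only make the tail decay faster.

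What is missing is the mechanism that makes $\langle\mathbf1,\hat{\bm v}\rangle$ drop by more than the best endpoint gains. Away from $x$ and $y$, the graph must contain a subgraph whose spectral radius is only slightly smaller than that of the part containing $x,y$ before the addition, and clearly smaller after it. Then, before the addition, the Perron vector carries a long, slowly decreasing tail made of many small entries (large $\ell_1$ mass, little $\ell_2$ mass), and this tail collapses once the edge is added. The size or sparsity of the appendage is not the relevant condition; its spectral radius relative to the core is.

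This is exactly the paper's counterexample. It is $K_4$ minus $x\adj y$, whose spectral radius $(1+\sqrt{17})/2\approx2.562$ becomes $3$. A path of length five runs from one of the other clique vertices to the center of a star $K_{1,6}$, whose spectral radius is $\sqrt6\approx2.449$. There $\langle\mathbf1,\hat{\bm v}\rangle$ falls from about $3.13$ to about $2.29$, a factor of about $1.37$. This beats the factor of about $1.28$ by which $\hat v(x)$ grows, and the scores of $x$ and $y$ go from $1.16684$ to $1.09054$.

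Your second family gestures in this direction but omits the spectral-radius condition. Your fallback of a brute-force search on $5$ to $7$ vertices is not a safe substitute. The mechanism needs room (the paper's instance has $15$ vertices), and nothing in your argument suggests that an instance that small exists.
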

\begin{proof}
	Consider the graph of Figure~\ref{fig:eigproj}. Before adding the edge $x\adj y$, the score of
	$x$ and $y$ is $1.16684\ldots$; after adding it, it is $1.09054\ldots$ (the computation can be
	carried out exactly, as $\lambda$ and $\lambda'$ are algebraic numbers of degree $9$ and $8$,
	respectively).
\end{proof}

\begin{figure}
	\centering
	\begin{tikzpicture}[main/.style = {draw, circle, minimum size=5mm, inner sep=0pt}, scale=0.9]
		\node[main] (x) at (0,1) {$x$};
		\node[main] (y) at (0,-1) {$y$};
		\node[main] (h) at (1.2,0) {};
		\node[main] (k) at (-1.2,0) {};
		\node[main] (a1) at (2.4,0) {};
		\node[main] (a2) at (3.6,0) {};
		\node[main] (a3) at (4.8,0) {};
		\node[main] (a4) at (6,0) {};
		\node[main] (c) at (7.2,0) {};
		\foreach \i in {1,...,6} {
				\node[main] (l\i) at ({7.2+1.3*cos(75-30*(\i-1))},{1.3*sin(75-30*(\i-1))}) {};
				\draw (c) -- (l\i);
			}
		\draw (x) -- (h) -- (y) -- (k) -- (x);
		\draw (h) -- (k);
		\draw (h) -- (a1) -- (a2) -- (a3) -- (a4) -- (c);
		\draw[dashed] (x) -- (y);
	\end{tikzpicture}
	\caption{\label{fig:eigproj}A counterexample to score semi-monotonicity for eigenvector
	centrality normalized by projecting $\mathbf 1$ on the dominant eigenspace: a $4$-clique minus
	the edge $x\adj y$, with a path of length five attached to one of its vertices and ending in the
	center of a star with six leaves. When the dashed edge $x\adj y$ is added, the scores of both
	$x$ and $y$ decrease from $1.16684$ to $1.09054$. With normalization in norm $\ell_1$ the scores
	increase from $0.11875$ to $0.20869$, and with normalization in norm $\ell_2$ from $0.37224$ to
	$0.47706$, as predicted by Corollary~\ref{cor:eig-score}.}
\end{figure}

The phenomenon behind Theorem~\ref{thm:eig-proj} is easy to explain. If $\hat{\bm v}$ denotes the
$\ell_2$-normalized dominant eigenvector, the projection normalization gives scores $\hat
v(z)\langle\mathbf1,\hat{\bm v}\rangle$. In the graph of Figure~\ref{fig:eigproj} the star is almost
as ``strong'' as the $4$-clique minus an edge (their spectral radii are $\sqrt6\approx2.449$ and
$(1+\sqrt{17})/2\approx2.562$, respectively), so before the addition of the new edge the dominant
eigenvector has a long, slowly decreasing tail on the path and on the star, which carries a large
part of its $\ell_1$ norm: $\langle\mathbf1,\hat{\bm v}\rangle\approx3.13$. Once the edge is added,
the clique becomes much stronger (its spectral radius is now $3$), the tail collapses and
$\langle\mathbf1,\hat{\bm v}'\rangle\approx2.29$. The decrease of the factor
$\langle\mathbf1,\hat{\bm v}\rangle$ overwhelms the increase of $\hat v(x)$ and $\hat v(y)$
guaranteed by Corollary~\ref{cor:eig-score}. Larger and more robust examples can be obtained by
attaching, in place of the star, any large graph whose spectral radius is slightly smaller than that
of the clique minus an edge: in fact, by doing this the decrease of the scores of $x$ and $y$ can be made
arbitrarily large.

\begin{figure}
	\centering
	\begin{tikzpicture}[main/.style = {draw, circle, minimum size=5mm, inner sep=0pt}, scale=0.9]
		% K_4
		\node[main] (k1) at (0,0.75) {};
		\node[main] (k2) at (1.5,0.75) {};
		\node[main] (k3) at (0,-0.75) {};
		\node[main] (k4) at (1.5,-0.75) {};
		\draw (k1) -- (k2) -- (k4) -- (k3) -- (k1);
		\draw (k1) -- (k4);
		\draw (k2) -- (k3);
		% path x - w - y
		\node[main] (x) at (3,0.75) {$x$};
		\node[main] (w) at (4.3,0) {$w$};
		\node[main] (y) at (3,-0.75) {$y$};
		\draw (x) -- (w) -- (y);
		\draw[dashed] (x) -- (y);
	\end{tikzpicture}
	\qquad\qquad
	\begin{tikzpicture}[main/.style = {draw, circle, minimum size=5mm, inner sep=0pt}, scale=0.9]
		% bull: equilateral triangle b1 b2 b3 (side 1.2), pendants p (at b1) and x (at b3)
		\node[main] (p) at (0,0) {};
		\node[main] (b1) at (1.2,0) {};
		\node[main] (b3) at (2.4,0) {};
		\node[main] (b2) at (1.8,-1.0392) {};
		\node[main] (x) at (3.6,0) {$x$};
		\draw (p) -- (b1) -- (b2) -- (b3) -- (b1);
		\draw (b3) -- (x);
		% star K_{1,5} with center c and leaf y
		\node[main] (y) at (4.8,0) {$y$};
		\node[main] (c) at (6,0) {$c$};
		\foreach \i in {1,...,4} {
				\node[main] (l\i) at ({6+1.2*cos(60-40*(\i-1))},{1.2*sin(60-40*(\i-1))}) {};
				\draw (c) -- (l\i);
			}
		\draw (y) -- (c);
		\draw[dashed] (x) -- (y);
	\end{tikzpicture}
	\caption{\label{fig:eigdisc}Two graphs that are not connected on which eigenvector centrality is
	not semi-monotone. Left: $K_4$ and a path $x\adj w\adj y$; adding the dashed edge $x\adj y$ does
	not change the dominant eigenvector, which vanishes on the triangle. Right: a bull and a star
	$K_{1,5}$; before adding the dashed edge $x\adj y$ the dominant eigenvector vanishes on the
	star, but afterwards the center $c$ of the star has a larger score than both $x$ and $y$.}
\end{figure}

We conclude this section by showing that, contrarily to what happens for PageRank
(Section~\ref{sec:pagerank}), connectivity cannot be replaced by a condition on the scores. If $G$
is not connected, eigenvector centrality is well defined only if $G$ has a unique connected
component of maximum spectral radius, and in that case it is the Perron vector of that component,
extended with zeros; in particular, it cannot be everywhere nonzero. Let $G$ be the disjoint union
of $K_4$ and of a path $x\adj w\adj y$ (Figure~\ref{fig:eigdisc}, left): adding $x\adj y$ turns the
path into a triangle, whose spectral radius $2$ is smaller than $\rho(K_4)=3$, so $\bm v'=\bm v$
under every normalization, and both score semi-monotonicity and strict rank semi-monotonicity fail,
exactly as they do for PageRank when $p(x)=p(y)=0$. Rank semi-monotonicity may fail even if
$v(x)>0$: let $G$ be the disjoint union of a \emph{bull} (a triangle with two pendant vertices) and
of a star $K_{1,5}$, let $x$ be a pendant vertex of the bull, and let $y$ be a leaf of the star,
whose center we denote by $c$ (Figure~\ref{fig:eigdisc}, right). Since the spectral radius of the
bull is $(1+\sqrt{13})/2\approx2.303>\sqrt5=\rho(K_{1,5})$, $\bm v$ is null on the star. On the
other hand, $G'$ is connected with $\rho(A')\approx2.399$, and normalizing in $\ell_1$ one finds
$v'(c)\approx0.134$, $v'(x)\approx0.101$ and $v'(y)\approx0.098$ (the inequalities
$v'(c)>v'(x)>v'(y)$ can be verified exactly by isolating the largest root of the characteristic
polynomial of $A'$). Thus $v(c)=v(y)=0<v(x)$, but the center of the star overtakes both endpoints of
the new edge, and rank semi-monotonicity fails at $x$, $y$ (score semi-monotonicity holds, as $y$
gains score).

\section{PageRank}
\label{sec:pagerank}

In this section $G$ is an undirected graph without isolated vertices, but not necessarily connected:
we will see that connectivity can be replaced by the weaker requirement that all scores be nonzero,
which is the hypothesis of the rank-monotonicity results of~\cite{boldi_furia_vigna_2023}. Let $D$
be the diagonal matrix of degrees and $P=D^{-1}A$ be the row-stochastic transition matrix of the
uniform random walk on $G$. PageRank~\cite{pagerank} with damping factor $\alpha\in(0\..1)$ and
preference vector $\bm v$ (a nonnegative row vector with unit $\ell_1$ norm) is the row vector
\[
	\bm p=(1-\alpha)\,\bm v\sum_{k\geq0}\alpha^kP^k=(1-\alpha)\,\bm v(1-\alpha P)^{-1}.
\]
Let $N=(1-\alpha P)^{-1}$ be the matrix of Section~\ref{sec:walks} for $W=\alpha P$, so that
$N_{uz}$ is the expected number of visits to $z$ of the random walk started at $u$ and killed with
probability $1-\alpha$ at each step. All entries of $N$ are nonnegative, and $N_{uz}>0$ iff $z$ is
reachable from $u$; hence $p(z)=(1-\alpha)\sum_u v_uN_{uz}$ is positive iff $\bm v$ has positive
mass on the connected component of $z$. In particular, all scores are nonzero if $G$ is connected,
or if $\bm v$ is everywhere nonzero. Moreover, the rows of $N$ sum to $1/(1-\alpha)$, so the entries
of $\bm p$ sum to one, in $G$ as well as in $G'$. Finally, since $N=(D-\alpha A)^{-1}D$ and
$D-\alpha A$ is symmetric, we have the \emph{reversibility} identity
\begin{equation}
	\label{eq:reversibility}
	\frac{N_{uz}}{d(z)}=\frac{N_{zu}}{d(u)}.
\end{equation}
The rows $x$ and $y$ of $P'$ are $\bigl(d(x)P_{x\cdot}+\bm e_y\bigr)/(d(x)+1)$ and
$\bigl(d(y)P_{y\cdot}+\bm e_x\bigr)/(d(y)+1)$, where $P_{u\cdot}$ denotes the $u$-th row of $P$ and
$\bm e_u$ the characteristic (row) vector of $u$; hence
\[
	P'=P+\bm e_x^T\bm\delta_x+\bm e_y^T\bm\delta_y,\qquad
	\bm\delta_x=\frac{\bm e_y-P_{x\cdot}}{d(x)+1},\qquad
	\bm\delta_y=\frac{\bm e_x-P_{y\cdot}}{d(y)+1},
\]
and PageRank is, up to the factor $1-\alpha$, the damped spectral centrality of
Section~\ref{sec:damped} for $\beta=\alpha$ and $M=P$. The update is admissible at $x$, $y$: since
$P_{xy}=0$ (as $x\noadj y$), $(\bm\delta_x)_w=-P_{xw}/(d(x)+1)\leq0$ for
every $w\neq x,y$, and symmetrically for $\bm\delta_y$. In the following, $P'$ is the transition
matrix of $G'$, $\bm p'$ the PageRank of $G'$, and
\[
	s_x=\frac{p'(x)}{d(x)+1},\qquad s_y=\frac{p'(y)}{d(y)+1}.
\]
Note that $\alpha s_x$ and $\alpha s_y$ are the amounts of PageRank flowing in $G'$ along the new edge
from $x$ to $y$ and from $y$ to $x$, respectively, and that the update vector of
Section~\ref{sec:damped} (computed for $\bm p$ rather than $\bm r=\bm p/(1-\alpha)$, which changes
nothing, as Lemma~\ref{lemma:damped-increment} is homogeneous) is
\[
	\bm g=s_x\bigl(\bm e_y-P_{x\cdot}\bigr)+s_y\bigl(\bm e_x-P_{y\cdot}\bigr).
\]

\begin{lemma}
	\label{lemma:pr-increment}
	Denoting with $N_{u\cdot}$ the $u$-th row of $N$,
	\[
		\bm p'=\bm p+s_x\bigl(\bm e_x+\alpha N_{y\cdot}-N_{x\cdot}\bigr)+s_y\bigl(\bm e_y+\alpha N_{x\cdot}-N_{y\cdot}\bigr).
	\]
	Consequently, letting $\Delta_z=p'(z)-p(z)$,
	\begin{align*}
		\Delta_z & =(\alpha s_x-s_y)N_{yz}+(\alpha s_y-s_x)N_{xz}\qquad\text{for $z\neq x,y$}, \\
		\Delta_x & =s_x+(\alpha s_x-s_y)N_{yx}+(\alpha s_y-s_x)N_{xx},                         \\
		\Delta_y & =s_y+(\alpha s_x-s_y)N_{yy}+(\alpha s_y-s_x)N_{xy}.
	\end{align*}
\end{lemma}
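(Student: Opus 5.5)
The plan is to obtain the identity by instantiating Lemma~\ref{lemma:damped-increment} with $\beta=\alpha$, $M=P$, $M'=P'$ and $\bm r=\bm p$. This is legitimate because the lemma is homogeneous in the preference vector, so the factor $1-\alpha$ is harmless. The update vector is then $\bm g=p'(x)\bm\delta_x+p'(y)\bm\delta_y$. Substituting the explicit $\bm\delta_x=(\bm e_y-P_{x\cdot})/(d(x)+1)$ and $\bm\delta_y=(\bm e_x-P_{y\cdot})/(d(y)+1)$ gives exactly
\[
	\bm g=s_x\bigl(\bm e_y-P_{x\cdot}\bigr)+s_y\bigl(\bm e_x-P_{y\cdot}\bigr).
\]
Hence $\bm p'-\bm p=\alpha\bm gN$.

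The next step is to compute $\alpha\bm gN$ row by row. We have $\bm e_yN=N_{y\cdot}$, and $P_{x\cdot}N$ is the $x$-th row of $PN$. From $N=1+\alpha PN$ we get $\alpha PN=N-1$, so $\alpha P_{x\cdot}N=N_{x\cdot}-\bm e_x$. Therefore
\[
	\alpha s_x\bigl(\bm e_y-P_{x\cdot}\bigr)N=s_x\bigl(\alpha N_{y\cdot}-N_{x\cdot}+\bm e_x\bigr),
\]
and symmetrically for the $y$ term. This yields the displayed formula for $\bm p'$.

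Finally, the three scalar formulas follow by reading off the $z$-th coordinate. For $z\neq x,y$ the vectors $\bm e_x$ and $\bm e_y$ contribute nothing, and regrouping by $N_{yz}$ and $N_{xz}$ gives $(\alpha s_x-s_y)N_{yz}+(\alpha s_y-s_x)N_{xz}$. At $z=x$ the term $s_x\bm e_x$ contributes an extra $s_x$, and at $z=y$ the term $s_y\bm e_y$ contributes an extra $s_y$.

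There is no real obstacle; the only point needing care is the identity $\alpha PN=N-1$, which follows directly from $N(1-\alpha P)=(1-\alpha P)N=1$. One must also take care to use the row $P_{x\cdot}$ as a row vector multiplied on the left of $N$.
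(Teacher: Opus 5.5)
Your proposal is correct and follows the paper's proof: you instantiate Lemma~\ref{lemma:damped-increment} with the PageRank update vector $\bm g=s_x(\bm e_y-P_{x\cdot})+s_y(\bm e_x-P_{y\cdot})$, use $\alpha PN=N-1$ to rewrite $\alpha\bm gN$, and read off the coordinates. Your coefficient regrouping, including the extra $s_x$ at $x$ and $s_y$ at $y$, is also right.
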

\begin{proof}
	By Lemma~\ref{lemma:damped-increment},
	$\bm p'=\bm p+\alpha\bm gN=\bm p+\alpha\bigl[s_x(\bm e_y-P_{x\cdot})+s_y(\bm e_x-P_{y\cdot})\bigr]N$,
	and $\alpha PN=N-1$ (because $(1-\alpha P)N=1$), so $\alpha(\bm e_y-P_{x\cdot})N=\alpha N_{y\cdot}-N_{x\cdot}+\bm e_x$, and
	symmetrically for the other term. The remaining formulas in the statement follow by direct computation.
\end{proof}

Note that $s_x,s_y\geq0$, and that $s_x>0$ iff $s_y>0$ iff $p(x)>0$ or $p(y)>0$. Indeed, if $s_x=0$ then
$p'(x)=0$, and the formula for $\Delta_x$ gives $p(x)+s_y(\alpha N_{xx}-N_{yx})=0$; but
$N_{yx}=h_{yx}N_{xx}\leq\alpha^2N_{xx}$ by~(\ref{eq:return}), because $x\noadj y$, so $\alpha
N_{xx}-N_{yx}\geq\alpha(1-\alpha)N_{xx}>0$, and we conclude that $p(x)=0$ and $s_y=0$. Symmetrically,
$s_y=0$ implies $p(y)=0$ and $s_x=0$. Conversely, if $p(x)=p(y)=0$ no walk from any node $z$ such that $v_z>0$
reaches $x$ or $y$ in $G$, hence also in $G'$ (whose only new edge is $x\adj y$), so $s_x=s_y=0$, and in
this case Lemma~\ref{lemma:pr-increment} gives $\bm p'=\bm p$.

Score semi-monotonicity now follows as an application of Corollary~\ref{cor:damped-score}:

\begin{theorem}
	\label{thm:pr-score}
	Let $G$ be an undirected graph without isolated vertices, $\alpha\in(0\..1)$, $\bm v$ a
	preference vector, and $x\noadj y$ be such that $p(x)>0$ or $p(y)>0$. Then $p'(x)>p(x)$ or
	$p'(y)>p(y)$. In particular, PageRank is score semi-monotone on every undirected graph without
	isolated vertices, for every damping factor and every preference vector, provided that all
	scores are nonzero (e.g., if $G$ is connected, or if $\bm v$ is everywhere nonzero). The
	condition on $x$ and $y$ cannot be dropped, as $\bm p'=\bm p$ when $p(x)=p(y)=0$.
\end{theorem}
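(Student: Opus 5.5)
The plan is to apply Corollary~\ref{cor:damped-score} with $M=P$, $M'=P'$, $\beta=\alpha$ and $\bm r=\bm p/(1-\alpha)$. The update is admissible at $x$, $y$, as already checked above. Since $P$ and $P'$ are row-stochastic and $\bm v$ is a distribution, the scores sum to one in both $G$ and $G'$, so $\sum_z r'(z)\geq\sum_z r(z)$ holds with equality. It then suffices to exhibit a vertex $u\neq x,y$ with $\gamma_u>0$, where $\gamma_u=-\sum_w\alpha g_w\nott N_{wu}$ and
\[
	\bm g=s_x\bigl(\bm e_y-P_{x\cdot}\bigr)+s_y\bigl(\bm e_x-P_{y\cdot}\bigr).
\]
Alternatively, I can redo the short argument of that corollary directly, using Lemma~\ref{lemma:pr-increment} together with the decomposition of Lemma~\ref{lemma:damped-decomp}.

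Finding $u$ uses the note after Lemma~\ref{lemma:pr-increment}: since $p(x)>0$ or $p(y)>0$, we have $s_x,s_y>0$. Because $G$ has no isolated vertices, $x$ has a neighbour $u$, and $u\neq y$ since $x\noadj y$. Then
\[
	g_u=-s_xP_{xu}-s_yP_{yu}\leq -s_x/d(x)<0.
\]
Also $\nott N_{uu}\geq1$ by the empty walk, because $u\notin\{x,y\}$. As every $w\neq x,y$ has $g_w\leq0$ and $\nott N_{wu}\geq0$, we get $\gamma_u\geq -\alpha g_u\nott N_{uu}>0$. Corollary~\ref{cor:damped-score} then yields $r'(x)>r(x)$ or $r'(y)>r(y)$, and therefore $p'(x)>p(x)$ or $p'(y)>p(y)$. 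The only point to watch is the homogeneity remark: the corollary is stated for $\bm r$, but the factor $1-\alpha$ affects neither signs nor the sum condition.

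For the ``in particular'' part, I would show that all scores are nonzero when $G$ is connected or $\bm v$ is everywhere nonzero. This follows from the characterization $p(z)>0$ iff $\bm v$ has mass on the component of $z$, and it makes the hypothesis $p(x)>0$ or $p(y)>0$ automatic for every nonadjacent pair. The final claim, that $\bm p'=\bm p$ when $p(x)=p(y)=0$, was already established in the paragraph preceding the theorem: there $s_x=s_y=0$, and Lemma~\ref{lemma:pr-increment} applies.

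I expect the main obstacle to be verifying $s_x,s_y>0$, which that note handles via $N_{yx}\leq\alpha^2N_{xx}$. Everything else is bookkeeping of signs.
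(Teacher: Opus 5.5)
Your proposal is correct and follows the paper's proof essentially step for step. You apply Corollary~\ref{cor:damped-score} using that the scores sum to one, take a neighbour $u$ of $x$, and use $s_x>0$ (from the remark after Lemma~\ref{lemma:pr-increment}) to get $\gamma_u\geq-\alpha g_u\nott N_{uu}\geq\alpha s_x/d(x)>0$; the ``in particular'' part and the claim $\bm p'=\bm p$ when $p(x)=p(y)=0$ are handled by that same remark and the characterization of positive scores, exactly as in the paper.
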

\begin{proof}
	The entries of $\bm p$ and $\bm p'$ sum to one. A neighbor $w$ of $x$ exists, and $w\neq x,y$;
	since $s_x>0$ (remark following Lemma~\ref{lemma:pr-increment}), $\gamma_w\geq-\alpha g_w\nott
	N_{ww}\geq\alpha s_x/d(x)>0$, and Corollary~\ref{cor:damped-score} applies. The last statement is
	again the remark following Lemma~\ref{lemma:pr-increment}.
\end{proof}

The general theory does not say \emph{which} endpoint gains score. For PageRank, the explicit form
of Lemma~\ref{lemma:pr-increment} gives a precise answer. The formula for $\Delta_z$ shows that at
most one of the two coefficients $\alpha s_x-s_y$ and $\alpha s_y-s_x$ can be positive (if both were, i.e., if $\alpha s_x>s_y$ and $\alpha s_y>s_x$, then
multiplying the two inequalities we would get $\alpha^2 s_xs_y>s_xs_y$). This gives the following trichotomy, which refines
Theorem~\ref{thm:pr-score}:

\begin{theorem}
	\label{thm:pr-trichotomy}
	Let $G$ be an undirected graph without isolated vertices, $\alpha\in(0\..1)$, $\bm v$ a
	preference vector, and $x\noadj y$ be such that $p(x)>0$ or $p(y)>0$. Then:
	\begin{renumerate}
		\item if $\alpha s_x\leq s_y\leq s_x/\alpha$, no vertex $z\neq x,y$ gains score, at least
		      one such vertex loses score (every such vertex, if $G$ is connected), and neither $x$
		      nor $y$ loses score: more precisely, $\Delta_x\geq0$, with equality only if $s_y=\alpha
		      s_x$ and all neighbors of $x$ are pendant vertices, and symmetrically for $y$; in
		      particular, the sum of the scores of $x$ and $y$ increases;
		\item if $s_y<\alpha s_x$, then $y$ gains score;
		\item if $s_x/\alpha<s_y$, then $x$ gains score.
	\end{renumerate}
\end{theorem}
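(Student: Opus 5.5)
We read everything off Lemma~\ref{lemma:pr-increment}, together with the nonnegativity of $N$, the return identities~(\ref{eq:return}), and the remark that $s_x,s_y>0$ whenever $p(x)>0$ or $p(y)>0$. Set $a=\alpha s_x-s_y$ and $b=\alpha s_y-s_x$, so that $\Delta_z=aN_{yz}+bN_{xz}$ for $z\neq x,y$.

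\emph{Case (i).} The hypothesis $\alpha s_x\le s_y\le s_x/\alpha$ says exactly $a\le0$ and $b\le0$. Hence $\Delta_z\le0$ for every $z\neq x,y$, because $N\ge0$. Next we show that at least one such vertex strictly loses. Both $a$ and $b$ cannot vanish, since $s_y=\alpha s_x$ together with $s_x=\alpha s_y$ gives $s_x=\alpha^2s_x$, so $s_x=0$, which is excluded. Take a neighbor $w$ of $x$; it exists because there are no isolated vertices, and $w\neq y$. Then $N_{xw}>0$ and $N_{yw}>0$ (when $w$ is reachable from $y$; if it is not, it is the coefficient $b$ that matters). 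Choosing $w$ as a neighbor of $x$ or of $y$ according to which of $a,b$ is nonzero, we get $\Delta_w<0$. If $G$ is connected, every $N_{uz}$ is positive, so every $z\neq x,y$ loses.

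For $\Delta_x\ge0$ we have
\[
\Delta_x=s_x+aN_{yx}+bN_{xx}=s_x(1-N_{xx})+s_y(\alpha N_{xx}-N_{yx})+\alpha s_xN_{yx}-\alpha s_x N_{yx}+\ldots
\]
Rather than expanding in this way, we use first-passage quantities. Write $N_{xx}=1/(1-f_x)$ and $N_{yx}=h_{yx}N_{xx}$. Then
\[
\Delta_x/N_{xx}=s_x(1-f_x)+a h_{yx}+b=s_x(1-f_x)-s_x+\alpha s_y+(\alpha s_x-s_y)h_{yx}.
\]
This simplifies to
\[
\Delta_x/N_{xx}=-s_xf_x+\alpha s_y+(\alpha s_x-s_y)h_{yx}.
\]
The right-hand side is linear in $s_y$ with coefficient $\alpha-h_{yx}\ge\alpha-\alpha^2>0$, using $h_{yx}\le\alpha^2$ since $x\noadj y$. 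So it is minimized at $s_y=\alpha s_x$, where its value is $s_x(\alpha^2-f_x)$.

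The key bound, and the main obstacle, is $f_x\le\alpha^2$, with equality iff every neighbor of $x$ is pendant. A return to $x$ in $G$ takes at least two steps, each surviving with probability $\alpha$, so $f_x\le\alpha^2$. Equality requires that, from every neighbor of $x$, the walk returns to $x$ immediately with probability one, which means every neighbor has degree one. This gives $\Delta_x\ge0$, with equality only when $s_y=\alpha s_x$ and all neighbors of $x$ are pendant, because the slope $\alpha-h_{yx}$ is strictly positive. The claim for $y$ follows by symmetry. For the sum $\Delta_x+\Delta_y$ to be zero we would need $s_y=\alpha s_x$ and $s_x=\alpha s_y$ simultaneously, which we excluded above; hence the sum of the scores of $x$ and $y$ strictly increases.

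\emph{Cases (ii) and (iii).} If $s_y<\alpha s_x$, then $a>0$, and consequently $b<0$ (at most one coefficient is positive). Using the formula for $\Delta_y$ together with the symmetric identity $N_{xy}=h_{xy}N_{yy}$, we get
\[
\Delta_y/N_{yy}=-s_yf_y+\alpha s_x+(\alpha s_y-s_x)h_{xy}.
\]
This is linear in $s_x$ with positive slope $\alpha-h_{xy}$. At $s_x=s_y/\alpha$ its value is $s_y(1-f_y)>0$, and since here $s_x>s_y/\alpha$, the value is even larger. Hence $\Delta_y>0$. Case~(iii) follows by exchanging the roles of $x$ and $y$.
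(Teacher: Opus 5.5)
Your case (i) is correct and follows the paper's route. You use the same first-passage rewriting, $\Delta_x/N_{xx}=(\alpha-h_{yx})(s_y-\alpha s_x)+s_x(\alpha^2-f_x)$, and the same bounds $h_{yx}\le\alpha^2$ and $f_x\le\alpha^2$. For the strict increase of $\Delta_x+\Delta_y$ you argue differently from the paper: $\Delta_x=\Delta_y=0$ would force both $s_y=\alpha s_x$ and $s_x=\alpha s_y$. The paper instead uses conservation of mass, $\Delta_x+\Delta_y=-\sum_{z\neq x,y}\Delta_z>0$. Both arguments are valid, and yours does not need the scores to sum to one. In the ``some vertex loses'' step, drop the confusing parenthetical and simply say: if $b<0$, a neighbor $w$ of $x$ has $\Delta_w\le bN_{xw}<0$; if $a<0$, take a neighbor of $y$.

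Case (ii), however, contains a computational error. At $s_x=s_y/\alpha$ the term $(\alpha s_y-s_x)h_{xy}$ does not vanish; it vanishes at $s_x=\alpha s_y$. The correct value at your anchor point is
\[
s_y(1-f_y)-s_y\,\frac{1-\alpha^2}{\alpha}\,h_{xy}=\frac{s_y}{\alpha}\bigl[\alpha(1-f_y)-(1-\alpha^2)h_{xy}\bigr],
\]
not $s_y(1-f_y)$. As written, your argument would prove case (ii) from $f_y<1$ and $h_{xy}<\alpha$ alone, which should have been a warning sign. The bracket above is exactly the quantity the paper must bound by a separate, nontrivial argument in Section~\ref{sec:loops}, where a loop at $y$breaks $f_y\le\alpha^2$.

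In the loopless case the repair is immediate: anchor at $s_x=\alpha s_y$ instead, that is, write
\[
\Delta_y/N_{yy}=(\alpha-h_{xy})(s_x-\alpha s_y)+s_y(\alpha^2-f_y).
\]
The second term is nonnegative by $f_y\le\alpha^2$, the bound you already proved for $x$. The first term is positive because $s_x>s_y/\alpha\geq\alpha s_y$ and $h_{xy}\le\alpha^2<\alpha$. This is precisely the paper's argument. Alternatively, you can keep your anchor: the bracket is at least $\alpha(1-\alpha^2)-(1-\alpha^2)\alpha^2=\alpha(1-\alpha)(1-\alpha^2)>0$. Either way the fix needs the bounds on $f_y$ and $h_{xy}$, which your current computation silently avoids.
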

\begin{proof}
	Recall that $s_x,s_y>0$ by the remark following Lemma~\ref{lemma:pr-increment}. We first rewrite
	the increments of the endpoints using~(\ref{eq:return}): $N_{yy}=1/(1-f_y)$ and
	$N_{xy}=h_{xy}N_{yy}$, so the last formula of Lemma~\ref{lemma:pr-increment} becomes
	\[
		\Delta_y=N_{yy}\bigl[s_y(1-f_y)+(\alpha s_x-s_y)+(\alpha s_y-s_x)h_{xy}\bigr]=N_{yy}\bigl[s_x(\alpha-h_{xy})+s_y(\alpha h_{xy}-f_y)\bigr],
	\]
	and symmetrically $\Delta_x=N_{xx}\bigl[s_y(\alpha-h_{yx})+s_x(\alpha h_{yx}-f_x)\bigr]$. Now,
	$f_y\leq\alpha^2$ because the walk needs at least two steps to return to $y$ (there are no
	loops), with equality iff the killed walk started at $y$ returns to $y$ with certainty in exactly
	two steps, that is, iff all neighbors of $y$ are pendant vertices; and $h_{xy}\leq\alpha^2<\alpha$
	because $x\noadj y$. The same holds with $x$ and $y$ exchanged.

	In case (i) the two coefficients $\alpha s_x-s_y$ and $\alpha s_y-s_x$ are nonpositive, and they cannot
	be both zero (that would give $\alpha^2=1$), so by Lemma~\ref{lemma:pr-increment} $\Delta_z\leq0$
	for all $z\neq x,y$, with strict inequality whenever $N_{yz}>0$ (if $\alpha s_x<s_y$) or $N_{xz}>0$
	(if $\alpha s_y<s_x$). This is the case for every $z\neq x,y$ if $G$ is connected, and in general for
	some $z\neq x,y$: a neighbor $z$ of $y$ (if $\alpha s_x<s_y$) or of $x$ (if $\alpha s_y<s_x$), which is
	different from $x$ and $y$ because $x\noadj y$ and there are no loops, and satisfies
	$N_{yz}\geq\alpha P_{yz}>0$, respectively $N_{xz}\geq\alpha P_{xz}>0$. Moreover, since
	$s_y\geq\alpha s_x$ and $\alpha-h_{yx}>0$,
	\[
		\Delta_x\geq N_{xx}\bigl[\alpha s_x(\alpha-h_{yx})+s_x(\alpha h_{yx}-f_x)\bigr]=N_{xx}\,s_x\,(\alpha^2-f_x)\geq0,
	\]
	where the first inequality is an equality iff $s_y=\alpha s_x$, and the second one iff
	$f_x=\alpha^2$, that is, iff all neighbors of $x$ are pendant vertices; symmetrically for $y$.
	Finally, since the entries of $\bm p$ and $\bm p'$ sum to one,
	$\Delta_x+\Delta_y=-\sum_{z\neq x,y}\Delta_z>0$.

	In case (ii), $s_x>\alpha s_x>s_y>\alpha s_y$, hence
	\[
		\Delta_y\geq N_{yy}\bigl[s_x(\alpha-h_{xy})-\alpha s_y(\alpha-h_{xy})\bigr]=N_{yy}(\alpha-h_{xy})(s_x-\alpha s_y)>0.
	\]
	Case (iii) is symmetric.
\end{proof}

The trichotomy has a natural interpretation: $\alpha s_x$ and $\alpha s_y$ are the flows along the new
edge in the two directions, and the proof shows that if one of the endpoints receives from the other
endpoint more than $1/\alpha$ times what it gives back, then the receiving endpoint certainly gains
score; when the exchange along the new edge is balanced, both endpoints gain (except that, in the
degenerate boundary case described in (i), one of them might keep its score). Thus, an endpoint can
lose score only in cases (ii) and (iii), and in that case it is the endpoint that gives to the other
one more than $1/\alpha$ times what it receives. In all cases at least one endpoint gains score,
which provides an alternative proof of Theorem~\ref{thm:pr-score}. Note also that the theorem holds
for \emph{every} preference vector, and in particular also for vertices $x$ and $y$ with zero
preference, as long as one of them has nonzero score. Figure~\ref{fig:trichotomy} shows the three
cases in the $(s_x,s_y)$ plane for a few values of $\alpha$: the larger the damping factor, the
smaller the imbalance between the two flows that suffices to single out the endpoint that gains.

\begin{figure}
	\centering
	\begin{tikzpicture}[scale=1.75]
		\foreach \a/\dx/\iix/\iiy/\iiix/\iiiy/\ix/\iy in {0.1/0/0.9/-0.09/-0.1/0.9/0.62/0.62, 0.3/1.45/0.85/0.1/0.1/0.85/0.66/0.55, 0.5/2.9/0.85/0.12/0.12/0.85/0.7/0.5, 0.7/4.35/0.85/0.2/0.15/0.85/0.78/0.66, 0.9/5.8/0.85/0.3/0.2/0.85/1.22/0.72} {
			\begin{scope}[shift={(\dx,0)}]
				\fill[gray!35] (0,0) -- (1,\a) -- (1,1) -- (\a,1) -- cycle;
				\draw[thick] (0,0) -- (1,\a);
				\draw[thick] (0,0) -- (\a,1);
				\draw[dashed, gray] (0,0) -- (1,1);
				\draw[->] (0,0) -- (1.1,0) node[right, font=\scriptsize] {$s_x$};
				\draw[->] (0,0) -- (0,1.1) node[above, font=\scriptsize] {$s_y$};
				\node[font=\scriptsize] at (0.5,1.22) {$\alpha=\a$};
				\node[font=\scriptsize] at (\iix,\iiy) {(ii)};
				\node[font=\scriptsize] at (\iiix,\iiiy) {(iii)};
				\node[font=\scriptsize] at (\ix,\iy) {(i)};
			\end{scope}
		}
		\draw[gray, thin] (6.92,0.74) -- (6.74,0.9);
	\end{tikzpicture}
	\caption{\label{fig:trichotomy}The three cases of Theorem~\ref{thm:pr-trichotomy} in the
	$(s_x,s_y)$ plane, for five values of the damping factor $\alpha$. The two lines through the
	origin are $s_y=\alpha s_x$ and $s_y=s_x/\alpha$; the dashed line is $s_y=s_x$. In the shaded
	region (i) both endpoints of the new edge gain score; below the shallower line, in region (ii), $y$
	gains score; above the steeper line, in region (iii), $x$ gains score. As $\alpha$ grows, region
	(i) shrinks toward the diagonal.}
\end{figure}

Rank semi-monotonicity is an instance of Theorem~\ref{thm:damped}. In this case the correction term
of Lemma~\ref{lemma:damped-decomp} has a simple form: a last-passage walk from $x$ is a step to a
neighbor $w\neq x,y$ of $x$ followed by a walk from $w$ that does not touch $x$ or $y$, so
$\Psi_{xz}=\alpha\sum_wP_{xw}\nott N_{wz}$, and analogously for $y$; hence
$\gamma_z=s_x\Psi_{xz}+s_y\Psi_{yz}$, and
\begin{equation}
	\label{eq:pr-decomp}
	\Delta_z=\Psi_{xz}\bigl(\Delta_x-s_x\bigr)+\Psi_{yz}\bigl(\Delta_y-s_y\bigr).
\end{equation}

\begin{theorem}
	\label{thm:pr-rank}
	Let $G$ be an undirected graph without isolated vertices, $\alpha\in(0\..1)$, $\bm v$ a
	preference vector, and $x\noadj y$ be such that $p(x)>0$ and $p(y)>0$. Then for every $z\neq
	x,y$ either $p(z)=p'(z)=0$, or $p(z)>0$ and
	\[
		\frac{p'(z)}{p(z)}<\max\left(\frac{p'(x)}{p(x)},\frac{p'(y)}{p(y)}\right).
	\]
	Consequently, if all scores are nonzero, PageRank is strictly ratio semi-monotone, and thus
	strictly rank semi-monotone, on every undirected graph without isolated vertices, for every
	damping factor and every preference vector. More generally, if $e\in\{x,y\}$ attains the maximum
	above, then $p'(z)<p'(e)$ for all $z\neq x,y$ such that $p(z)\leq p(e)$.
\end{theorem}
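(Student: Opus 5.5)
The plan is to instantiate Theorem~\ref{thm:damped} with $M=P$, $\beta=\alpha$ and $\bm r=\bm p/(1-\alpha)$; the ratios $p'(z)/p(z)$ are unaffected by the factor $1-\alpha$. The update was shown above to be admissible at $x$, $y$, and $p(x),p(y)>0$ by hypothesis. The case $p(z)=0$ is then immediate from Theorem~\ref{thm:damped}(ii), which gives $p'(z)=0$.

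For $p(z)>0$, the strict inequality of Theorem~\ref{thm:damped}(i) needs two ingredients. The first is $\max\bigl(p'(x)/p(x),p'(y)/p(y)\bigr)>1$. This follows from Theorem~\ref{thm:pr-score}, since $p(x)>0$ gives $p'(x)>p(x)$ or $p'(y)>p(y)$. The second is that for each $z$ with $p(z)>0$ one of the three strictness conditions holds. I expect this second point to be the main step.

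I would argue it using the explicit form $\gamma_z=s_x\Psi_{xz}+s_y\Psi_{yz}$ derived just before~(\ref{eq:pr-decomp}), together with $s_x,s_y>0$ (remark following Lemma~\ref{lemma:pr-increment}). If $\Psi_{xz}=\Psi_{yz}=0$, the third condition holds directly. Otherwise some $\Psi$ is positive, and since both $s$'s are positive we get $\gamma_z>0$. So strictness holds for every $z\neq x,y$ with $p(z)>0$, with no need to examine $\nu_z$. The one point to double-check is that $\gamma_z$ computed for $\bm w=\alpha\bm g$ really equals $s_x\Psi_{xz}+s_y\Psi_{yz}$, up to the harmless positive factor from homogeneity. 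This follows from $g_w=-s_xP_{xw}-s_yP_{yw}$ for $w\neq x,y$ and $\Psi_{xz}=\alpha\sum_wP_{xw}\nott N_{wz}$.

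The consequences then follow directly. If all scores are nonzero, the displayed strict inequality for every $z\neq x,y$ is exactly strict ratio semi-monotonicity, where the maximizing endpoint $e$ is chosen independently of $z$. Theorem~\ref{thm:ratio-semi-mon} then yields strict rank semi-monotonicity. For the final general claim, I would invoke Corollary~\ref{cor:damped-rank}, or repeat its one-line argument. If $p(z)=0$, then $p'(z)=0<p'(e)$, using $p'(e)>p(e)\cdot 1>0$ because the maximum ratio exceeds $1$. If $p(z)>0$ and $p(z)\le p(e)$, then $p'(z)<p(z)\,p'(e)/p(e)\le p'(e)$.
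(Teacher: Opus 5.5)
Your proposal is correct and follows the paper's proof essentially step by step. Like the paper, you get $\max\bigl(p'(x)/p(x),p'(y)/p(y)\bigr)>1$ from Theorem~\ref{thm:pr-score}, check the strictness condition of Theorem~\ref{thm:damped} through $\gamma_z=s_x\Psi_{xz}+s_y\Psi_{yz}$ with $s_x,s_y>0$, and conclude via Theorem~\ref{thm:damped}, Corollary~\ref{cor:damped-rank} and Theorem~\ref{thm:ratio-semi-mon}; your explicit verification of the $\gamma_z$ formula and of the case $p(z)=0$ only spells out details the paper leaves implicit.
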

\begin{proof}
	By Theorem~\ref{thm:pr-score}, $\max\bigl(p'(x)/p(x),p'(y)/p(y)\bigr)>1$, and
	$\gamma_z=s_x\Psi_{xz}+s_y\Psi_{yz}$ is positive unless $\Psi_{xz}=\Psi_{yz}=0$, as $s_x,s_y>0$: the
	condition for strict inequality of Theorem~\ref{thm:damped} holds for every $z$, and the
	statements follow from Theorem~\ref{thm:damped}, Corollary~\ref{cor:damped-rank} and
	Theorem~\ref{thm:ratio-semi-mon}.
\end{proof}

The condition $p(x),p(y)>0$ of Theorem~\ref{thm:pr-rank} cannot be weakened to the condition
$p(x)>0$ or $p(y)>0$ of Theorem~\ref{thm:pr-score}.

\begin{figure}
	\centering
	\begin{tikzpicture}[main/.style = {draw, circle, minimum size=5mm, inner sep=0pt}, scale=0.9]
		\node[main] (lone) at (0,1) {};
		\node[main] (ltwo) at (0,-1) {};
		\node[main] (c) at (1,0) {$c$};
		\node[main] (y) at (3,0) {$y$};
		\node[main] (x) at (5,0) {$x$};
		\node[main] (w) at (7,0) {$w$};
		\draw (lone) -- (c) -- (y);
		\draw (ltwo) -- (c);
		\draw (x) -- (w);
		\draw[dashed] (x) -- (y);
	\end{tikzpicture}
	\caption{\label{fig:pxy}A graph used to show that the condition $p(x),p(y)>0$ of Theorem~\ref{thm:pr-rank} cannot be weakened.}
\end{figure}

To show this, consider the graph $G$ of Figure~\ref{fig:pxy}, and let $\bm v$ be concentrated on $x$. Then $p(x)=1/(1+\alpha)$, $p(w)=\alpha/(1+\alpha)$, and all other scores are
zero, whereas $G'=G+x\adj y$ is connected, and a direct computation gives
\[
	p'(x)=\frac{2(6-5\alpha^2)}{(1+\alpha)(12-7\alpha^2)},\quad
	p'(y)=\frac{2\alpha(3-2\alpha^2)}{(1+\alpha)(12-7\alpha^2)},\quad
	p'(c)=\frac{3\alpha^2}{(1+\alpha)(12-7\alpha^2)}.
\]
Thus, $p'(c)>p'(y)$ iff $4\alpha^2+3\alpha>6$, that is, $\alpha>(\sqrt{105}-3)/8\approx0.906$, and
$p'(c)>p'(x)$ iff $13\alpha^2>12$, that is, $\alpha>2\sqrt{39}/13\approx0.961$. For $\alpha=49/50$,
say, we have $p(c)=p(y)=0<p(x)$, but $p'(c)>p'(x)>p'(y)$ (the three values are approximately
$0.276$, $0.229$ and $0.202$): the center of the star overtakes both endpoints of the new edge, so
PageRank is not even (non-strictly) rank semi-monotone on $G$ with preference vector $\bm v$. Note
that $y$ gains score, as guaranteed by Theorem~\ref{thm:pr-trichotomy}(ii). The same happens for
every preference vector supported on $\{x,w\}$, since replacing $\bm v$ with $v_x\bm e_x+v_w\bm e_w$
multiplies the scores of $x$, $y$ and $c$ in $G'$ by the same factor $v_x+\alpha v_w$.

We remark that (\ref{eq:reversibility}) holds also for $\nott N$, as $\nott N=(1-\alpha
P_S)^{-1}=(D_S-\alpha A_S)^{-1}D_S$, where $S=V_G\setminus\{x,y\}$ and $P_S$, $D_S$, $A_S$ are the
submatrices of $P$, $D$, $A$ indexed by $S$; since $\Phi_{zx}=\alpha\sum_w\nott N_{zw}P_{wx}$, we
obtain $\Psi_{xz}/d(z)=\Phi_{zx}/d(x)$, and~(\ref{eq:pr-decomp}) becomes
\[
	\frac{\Delta_z}{d(z)}=\Phi_{zx}\,\frac{\Delta_x-s_x}{d(x)}+\Phi_{zy}\,\frac{\Delta_y-s_y}{d(y)}.
\]
This shows that the increase of PageRank \emph{per unit of degree} (which is, up to a factor
$1-\alpha$, the increase of the PageRank of the walker seen from the point of view of an edge) is
always strictly maximized at $x$ or $y$; the rank semi-monotonicity statement is instead about
PageRank itself, and this is why we have to pass through ratios rather than differences.

\section{Impact of Loops}
\label{sec:loops}

Our definition of undirected graph excludes loops. In this section we discuss what happens if loops
are allowed,: a loop at $u$ is a further neighbor of $u$, so that
$A_{uu}=1$ and the loop contributes one to the degree $d(u)$; this is the convention under which the
uniform random walk moves from $u$ to each of its neighbors, $u$ included, with the same
probability. Other conventions are possible and natural in other contexts (for instance, a loop can
be seen as two arcs of a symmetric directed graph, contributing two to the degree; see the
discussion in~\cite{fibrations}), and would lead to slightly different statements. The new edge
$x\adj y$ is always assumed not to be a loop, that is, $x\neq y$.

A vertex whose only neighbor is itself behaves, for the purposes of this paper, like an isolated
vertex; accordingly, in the results that assume the absence of isolated vertices, the hypothesis
must be replaced by the hypothesis that \emph{every vertex has a neighbor different from itself}.

\paragraph{Katz's index.} Nothing changes. The walk-counting framework of Section~\ref{sec:walks} and
the comparison theorem of Section~\ref{sec:damped} hold for arbitrary nonnegative matrices, adding
the edge $x\adj y$ is still an admissible update of $A$, and the proof of Theorem~\ref{thm:katz}
uses only that $N_{xx}\geq1$ and $\nott N_{zz}\geq1$, which is true because of the empty walk. Thus
Theorem~\ref{thm:katz} holds verbatim on undirected graphs with loops, and so does the example
following it (a loopless graph) showing that Katz's index is not $\delta$-semi-monotone.

\paragraph{Eigenvector centrality.} Again, nothing changes. The Perron--Frobenius theorem requires
only that $A$ be irreducible, that is, that $G$ be connected (loops even make $A$ primitive), and
$\lambda'>\lambda$ still follows from $A'\geq A$, $A'\neq A$. Lemma~\ref{lemma:eig-identity},
Theorem~\ref{thm:eigenvector} and Corollary~\ref{cor:eig-score} never use the absence of loops.

\paragraph{PageRank.} Here the situation is more delicate, because the absence of loops is used in
the proofs of Section~\ref{sec:pagerank} in three places: to find a neighbor of $x$ or $y$
different from $x$ and $y$ (in the proofs of Theorems~\ref{thm:pr-score}
and~\ref{thm:pr-trichotomy}(i)), which is again guaranteed if every vertex has a neighbor different
from itself; in the bound $f_y\leq\alpha^2$ of the proof of Theorem~\ref{thm:pr-trichotomy}(ii),
which fails if there is a loop at $y$; and in the bound $f_x\leq\alpha^2$ of the proof of
Theorem~\ref{thm:pr-trichotomy}(i), which fails if there is a loop at $x$. Note that if both $x$ and
$y$ have themselves as their only neighbor and $v_x=v_y$, then $\bm p'=\bm p$, so the hypothesis
that every vertex has a neighbor different from itself is necessary.

Under that hypothesis, Theorem~\ref{thm:pr-score} and Theorem~\ref{thm:pr-rank} hold with the same
proofs, as they never use the bounds on $f_x$ and $f_y$. Cases (ii) and (iii) of
Theorem~\ref{thm:pr-trichotomy} also hold. Indeed, the bound $f_y\leq\alpha^2$ is used only to show
that $s_x(\alpha-h_{xy})+s_y(\alpha h_{xy}-f_y)>0$ when $s_y<\alpha s_x$, and for this purpose the
inequality $\alpha(1-f_y)\geq h_{xy}(1-\alpha^2)$ is sufficient: if $f_y\leq\alpha h_{xy}$ there is
nothing to prove, and otherwise
\[
	s_x(\alpha-h_{xy})+s_y(\alpha h_{xy}-f_y)>s_x(\alpha-h_{xy})-\alpha s_x(f_y-\alpha h_{xy})=s_x\bigl[\alpha(1-f_y)-h_{xy}(1-\alpha^2)\bigr]\geq0.
\]
We now show that the inequality holds also if there is a loop at $y$. We can assume that $h_{xy}>0$,
for otherwise $N_{xy}=0$ and $\Delta_y=s_y+(\alpha s_x-s_y)N_{yy}>0$. Let $S$ be the set of vertices
different from $y$ having a neighbor different from $y$, and $H=\max_{z\in S}h_{zy}$. A vertex of
$S$ adjacent to $y$ has degree at least two, and all neighbors of a vertex of $S$ different from
$y$ belong to $S$; thus, if $z\in S$ attains the maximum,
\[
	H=h_{zy}=\alpha\Bigl[P_{zy}+\sum_{v\neq y}P_{zv}h_{vy}\Bigr]\leq\alpha\bigl[P_{zy}+(1-P_{zy})H\bigr]\leq\frac{\alpha(1+H)}2,
\]
that is, $H\leq\alpha/(2-\alpha)$. Since $x$ and its neighbors belong to $S$, we have
$h_{xy}\leq\alpha H\leq\alpha^2/(2-\alpha)$. Moreover, the last vertex before $y$ of a walk from
$x$ to $y$ is a neighbor of $y$ belonging to $S$, whereas the remaining $d(y)-2$ neighbors $z\neq y$
of $y$ satisfy $h_{zy}\leq\alpha$; hence
\[
	f_y=\frac{\alpha}{d(y)}\Bigl[1+\sum_{z\adj y,\,z\neq y}h_{zy}\Bigr]\leq\frac{\alpha}{d(y)}\Bigl[1+(d(y)-2)\alpha+\frac{\alpha}{2-\alpha}\Bigr],
\]
and a direct computation gives
\[
	\alpha(1-f_y)-h_{xy}(1-\alpha^2)\geq\frac{2\alpha(1-\alpha)^2\bigl(d(y)(1+\alpha)-\alpha\bigr)}{d(y)(2-\alpha)}>0.
\]

Case (i) of Theorem~\ref{thm:pr-trichotomy}, instead, survives only in part: no vertex $z\neq x,y$
gains score, at least one of them loses score, and $\Delta_x+\Delta_y>0$, with the same proof; but
the endpoint with a loop may lose score, and the characterization of the equality case in terms of
pendant neighbors (which rests on $f_x\leq\alpha^2$) is lost. Let $G$ be the graph
\[
	\begin{tikzpicture}[main/.style = {draw, circle, minimum size=5mm, inner sep=0pt}, scale=0.9]
		\node[main] (u) at (0,0) {$u$};
		\node[main] (y) at (2,0) {$y$};
		\node[main] (x) at (4,0) {$x$};
		\node[main] (w) at (6,0) {$w$};
		\draw (u) -- (y);
		\draw (w) -- (x);
		\draw[dashed] (x) -- (y);
		\draw (x) to[loop above, out=50, in=140, looseness=8] (x);
	\end{tikzpicture}
\]
and let $\bm v=\frac34\bm e_w+\frac14\bm
e_u$. Then $p(x)=3\alpha/(2(\alpha+2))$ and $p(y)=\alpha/(4(\alpha+1))$ are positive, and a direct
computation gives, writing $E=6+4\alpha-2\alpha^2-\alpha^3$ (which is positive on $(0\..1)$),
\begin{align*}
	s_y-\alpha s_x & =\frac{\alpha(1-\alpha)(3-\alpha-3\alpha^2)}{4E}, \\
	s_x-\alpha s_y & =\frac{\alpha(1-\alpha)(6+4\alpha-\alpha^2)}{4E},   \\
	\Delta_x       & =-\frac{3\alpha^3(\alpha+1)}{4(\alpha+2)E}.
\end{align*}
Thus, for $\alpha\leq(\sqrt{37}-1)/6\approx0.847$ we are in case (i), but $x$ loses score for every
$\alpha\in(0\..1)$ (for instance, for $\alpha=1/2$ the score of $x$ decreases from $3/10$ to
$69/236$). The reason is that the loop makes the killed walk started at $x$ return to $x$ in a single
step with probability $\alpha/2$, so that $f_x=\alpha(1+\alpha)/2>\alpha^2$. As a consequence, the
statement that an endpoint can lose score only in cases (ii) and (iii) depends on the absence of
loops, too.

\paragraph{Seeley's index.} Finally, we consider Seeley's index, which we did not discuss because
on loopless graphs its properties follow immediately from~\cite{boldi_furia_vigna_2023}. On a graph
with loops, the stationary distribution of the uniform random walk is $d(u)/S$, where
$S=\sum_vd(v)$ is the sum of all degrees (which is no longer twice the number of edges), and we
take this as the definition of Seeley's index. Then Seeley's index is still score monotone and
strictly rank monotone, with the same proof as in~\cite{boldi_furia_vigna_2023}, provided that
every vertex has a neighbor different from itself. Indeed, for $z\neq x,y$ we still have
$c'(z)-c(z)=d(z)/(S+2)-d(z)/S<0$, and $c'(z)<c'(x)$ if $d(z)\leq d(x)$, whereas
\[
	c'(x)-c(x)=\frac{d(x)+1}{S+2}-\frac{d(x)}{S}=\frac{S-2d(x)}{S(S+2)},
\]
so we need $S>2d(x)$, that is, $\sum_{v\neq x}d(v)>d(x)$. Every neighbor $v\neq x$ of $x$
contributes at least one to the left-hand side, for a total of at least $d(x)-A_{xx}$; moreover,
$y$ is not adjacent to $x$ and has a neighbor $w\neq y$ (necessarily $w\neq x$), and the edge
$y\adj w$ contributes at least two more (one from $d(y)$, and one from $d(w)$, since if $w$ is a
neighbor of $x$ its degree is at least two). Hence $\sum_{v\neq x}d(v)-d(x)\geq2-A_{xx}\geq1$. The
hypothesis cannot be dropped: if $x$ has a loop and a single further neighbor, which is a pendant
vertex, and $y$ has just a loop, then $S=4$, $d(x)=2$ and $c'(x)=c(x)=1/2$; and if both $x$ and $y$
have just a loop, neither of them changes its score.

\section{Conclusions}
\label{sec:conclusions}

\begin{table}
	\centering
	{
		\begin{tabular}{l|ll||c|c||c|c}
			\multicolumn{3}{c||}{}                                                             & \multicolumn{2}{c||}{monotonicity~\cite{boldi_furia_vigna_2023}} & \multicolumn{2}{c}{semi-monotonicity}                        \\
			\multicolumn{3}{c||}{}                                                             & score & rank   & score       & rank            \\\hline
			\multirow{6}{*}{Spectral} & \multirow{3}{*}{Eigenvector $\left\{\vrule height 1.5em depth 1.5em width 0pt\right.$} & $\ell_1$ & no & no & {\bf yes} & {\bf strict} \\
			                          &                                                        & $\ell_2$ & no & no & {\bf yes}   & {\bf strict}    \\
			                          &                                                        & proj.    & no & no & {\bf no}    & {\bf strict}    \\
			                          & \multicolumn{2}{l||}{Katz's index}                     & yes   & no     & yes         & {\bf strict}    \\
			                          & \multicolumn{2}{l||}{PageRank}                         & no    & no     & {\bf yes}   & {\bf strict}    \\
			                          & \multicolumn{2}{l||}{Seeley's index}                   & yes   & strict & yes         & strict          \\\hline
			\multirow{3}{*}{Geometric~\cite{BDFSRSM}} & \multicolumn{2}{l||}{Closeness}        & yes   & no     & yes         & yes             \\
			                          & \multicolumn{2}{l||}{Harmonic centrality}              & yes   & no     & yes         & strict          \\
			                          & \multicolumn{2}{l||}{Betweenness}                      & no    & no     & no          & yes             \\
		\end{tabular}}
	\vspace*{3mm}
	\caption{\label{tab:summ}Summary of the results about spectral and geometric centralities on
	connected undirected graphs. Results in boldface are new; the results about geometric
	centralities are from~\cite{BDFSRSM}, and all results about monotonicity are
	from~\cite{boldi_furia_vigna_2023}. For rank semi-monotonicity, ``yes'' means that the property
	holds but not in its strict form, whereas ``strict'' means that the strict form holds. All results are valid also for graphs with loops, under the convention that a loop adds one to the degree, provided every vertex has a neighbor other than itself (a hypothesis that matters only for PageRank and Seeley's index).}
\end{table}

Table~\ref{tab:summ} summarizes our results. Semi-monotonicity, in its strict rank form, is
satisfied by all three spectral centralities, and score semi-monotonicity is satisfied by all of them
except for eigenvector centrality under a non-monotone normalization. In particular, the answer to
the question raised in~\cite{boldi_furia_vigna_2023}---can a new collaboration hurt the PageRank of
both collaborators?---is negative. Connectivity is essential only for eigenvector centrality: the
results about Katz's index hold on every undirected graph, and those about PageRank hold on every
graph without isolated vertices as
long as all scores are nonzero, which is the hypothesis of the rank-monotonicity results
of~\cite{boldi_furia_vigna_2023}; the example following Theorem~\ref{thm:pr-rank} shows that this
hypothesis cannot be dropped.

It is interesting to contrast these results with those about geometric centralities and
betweenness~\cite{BDFSRSM}: for the latter, semi-monotonicity is proved through basin dominance, an
additive property of shortest-path distances, and strictness fails for closeness and betweenness;
for spectral centralities, instead, the natural property is multiplicative, and strictness is always
attained. All our positive results are instances of a single comparison theorem for damped spectral
centralities~\cite{vigna} (Theorem~\ref{thm:comparison}), which rests on the fact that the score of a
vertex $z\neq x,y$ is a nonnegative combination of the scores of $x$ and $y$ plus a term accounting
for the walks that never pass through $x$ or $y$: for Katz's index and PageRank the theorem is
applied to the increment, which is itself a damped spectral centrality, and it holds for all updates
that modify the rows of the two endpoints and can only decrease entries outside of the corresponding
columns; for eigenvector centrality, it is applied to the new and the old eigenvector, both of which
are damped spectral centralities of the old graph. Ratio semi-monotonicity might be useful for other
measures defined through walks or random processes on the graph, and it would be interesting to know
whether it holds for further modifications of the graph (e.g., adding several edges at once).

\bibliography{bibliography}

\end{document}